\newtheorem{theorem}{Theorem}[section]
\newtheorem{proposition}[theorem]{Proposition}
\newtheorem{corollary}[theorem]{Corollary}
\newtheorem{definition}{Definition}[section]
\newcommand{\score}{\textrm{score}}
\newcommand{\sharpp}{{{\mathrm{\#P}}}}
\newcommand{\np}{{{\mathrm{NP}}}}
\newcommand{\fpt}{{{\mathrm{FPT}}}}
\newcommand{\p}{{{\mathrm{P}}}}
\newcommand{\calS}{{\mathcal{S}}}
\definecolor{darkgreen}{rgb}{0,0.5,0}
\definecolor{darkpink}{rgb}{0.75,0.25,0.25}
\definecolor{RED}{rgb}{1,0,0}
\newcommand{\figurecut}[1]{}
\newcommand{\cutfornow}[1]{}
\newcommand{\conp}{\ensuremath{{\mathrm{coNP}}}}
\newcommand{\us}{\ensuremath{{\mathrm{US}}}}
\newcommand{\wone}{{{\mathrm{W[1]}}}}
\newcommand{\sharpwone}{{{\mathrm{\#W[1]}}}}
\newcommand{\av}{\textsc{AV}}
\newcommand{\sav}{\textsc{SAV}}
\newcommand{\ccav}{\textsc{CCAV}}
\newcommand{\pav}{\textsc{PAV}}
\newcommand{\phragmen}{\textsc{Phragm{\'e}n}}
\newcommand{\sharpmatching}{\textsc{\#Matching}}
\newcommand{\uniquecommittee}{\textsc{Unique-Committee}}
\newcommand{\uniquethresholdcommittee}{\textsc{Unique-Threshold-Committee}}
\newcommand{\repeatproposition}[1]{  \begingroup
  \renewcommand{\theproposition}{\ref{#1}}  \expandafter\expandafter\expandafter\proposition
  \csname repproposition@#1\endcsname
  \endproposition
  \endgroup
  \setcounter{theorem}{\value{theorem}-1}
}
\newcommand{\repeattheorem}[1]{  \begingroup
  \renewcommand{\thetheorem}{\ref{#1}}  \expandafter\expandafter\expandafter\theorem
  \csname reptheorem@#1\endcsname
  \endtheorem
  \endgroup
  \setcounter{theorem}{\value{theorem}-1}
}
\title{Ties in Multiwinner Approval Voting}
\author{
  Łukasz Janeczko\\
  AGH University\\
  Kraków, Poland
  \and
  Piotr Faliszewski\\
  AGH University\\
  Kraków, Poland
}
\begin{document}


\maketitle

\begin{abstract}
    We study the complexity of deciding whether there is a tie in a given
    approval-based multiwinner election, as well as the complexity of
    counting tied winning committees. We consider a family of Thiele
    rules, their greedy variants, Phragm{\'e}n's sequential rule, and
    Method of Equal Shares. For most cases, our problems are
    computationally hard, but for sequential rules we find an FPT
    algorithm for discovering ties (parameterized by the committee
    size). We also show experimentally that in elections of moderate
    size ties are quite frequent.
\end{abstract}

\section{Introduction}

In an approval-based multiwinner election, a group of voters expresses
their preferences about a set of candidates---i.e., each voter
indicates which of them he or she approves---and then, using some
prespecified rule, the organizer selects a winning committee (a
fixed-size subset of the candidates).
Multiwinner elections can be used to resolve very serious
matters---such as choosing a country's parliament---or rather
frivolous ones---such as choosing the tourist attractions that a group
of friends would visit---or those positioned anywhere in between these
two extremes---such as choosing a department's representation for the
university senate. In large elections, one typically does not expect
ties to occur (although surprisingly many such cases are
known\footnote{\texttt{https://en.wikipedia.org/wiki/List\_of\_close\_
    election\_results]}}), but for small and moderately sized ones the
issue is unclear.  While perhaps a group of friends may manage to not
spoil their holidays upon discovery that they were as willing to visit
one monument as another, a person not selected for a university senate
due to a tie may be quite upset, especially if this tie is discovered
after announcing the results.  To address such possibilities, we study
the following three issues:
\begin{enumerate}
\item We consider the complexity of detecting if two or more
  committees tie under a given voting rule. While for most rules this
  problem turns out to be intractable, for many settings we find
  practical solutions (in most cases it is either possible to use a
  natural integer linear programming trick or an FPT algorithm that we
  provide).
\item We consider the complexity of counting the number of winning
  committees. We do so, because being able to count winning committees
  would be helpful in sampling them uniformly. Unfortunately, in this
  case we mostly find hardness and hardness of approximation results.
\item We generate a number of elections, both synthetic and based on
  real-life data, and evaluate the frequency of ties. 
  It turns out to be surprisingly high.
\end{enumerate}

We consider a subfamily of Thiele 
rules
\citep{Thie95a,azi-gas-gud-mac-mat-wal:c:approval-multiwinner,lac-sko:c:approval-thiele}
that includes the multiwinner approval rule (AV), the approval-based
Chamberlin--Courant rule (CCAV), and the proportional approval voting
rule (PAV), as well as on their greedy variants.
We also study satisfaction approval voting (SAV), the Phragm{\'e}n
rule, and Method of Equal Shares (MEqS).
This set includes rules appropriate for selecting committees of
individually excellent candidates (e.g., AV or SAV), diverse
committees (e.g., CCAV or GreedyCCAV), or proportional ones (e.g.,
PAV, GreedyPAV, Phragm{\'e}n, or MEqS); see the works of
\citet{elk-fal-sko-sli:j:multiwinner-properties} and
\citet{fal-sko-sli-tal:b:multiwinner-voting} for more details on
classifying multiwinner rules with respect to their application.  We
summarize our results in Table~\ref{tab:results}. See also the
textbook of \citet{lac-sko:b:multiwinner-approval}.

The issue of ties and tie-breaking has already received quite some
attention in the literature, although typically in the context of
single-winner voting.  For example,
\citet{obr-elk:c:random-ties-matter} and
\citet{obr-elk-haz:c:ties-matter} consider how various tie-breaking
mechanisms affect the complexity of manipulating elections, and
recently \citet{xia:c:probability-of-ties} has made a breakthrough in
studying the probability that ties occur in large, randomly-generated
single-winner elections.  \citet{xia:t:tie-breaking} also developed a
novel tie-breaking mechanisms, which can be used for some multiwinner
rules, but he did not deal with such approval rules as we study here.
Finally, \citet{con-rog-xia:c:mle} have shown that deciding if a
candidate is a tied winner in an STV election is $\np$-hard. While STV
is not an approval-based rule and they focused on the single-winner
setting, many of our results are in similar spirit.

\begin{table}[t]
  \centering
  \begin{tabular}{l|cc}
    \toprule
    Rule & \textsc{Unique-Committee}& \textsc{\#Winning-Committees} \\
    \midrule
    AV & $\p$ & $\p$ \\
    SAV & $\p$ & $\p$ \\
    \midrule
    CCAV & $\conp$-hard, $\mathrm{coW[1]}$-h. ($k$) &  $\sharpp$-hard, $\sharpwone$-hard ($k$)\\
    PAV  & $\conp$-hard, $\mathrm{coW[1]}$-h. ($k$) &  $\sharpp$-hard, $\sharpwone$-hard ($k$)\\
    \midrule
    GreedyCCAV  & $\conp$-com., $\fpt(k)$ &  $\sharpp$-hard, $\sharpwone$-hard ($k$)\\
    GreedyPAV   & $\conp$-com., $\fpt(k)$ &  $\sharpp$-hard, $\sharpwone$-hard ($k$)\\
    Phragm{\'e}n & $\conp$-com., $\fpt(k)$ &  $\sharpp$-hard, $\sharpwone$-hard ($k$)\\
    MEqS (Phase 1)        & $\conp$-com., $\fpt(k)$ &  $\sharpp$-hard\\
    \bottomrule
  \end{tabular}
  \caption{\label{tab:results}Summary of our complexity results.}
\end{table}

\section{Preliminaries}

By $\mathbb{R}_+$ we denote the set of nonnegative real numbers. For
each integer $t$, we write $[t]$ to mean $\{1, \ldots, t\}$.  We use
the Iverson bracket notation, i.e., for a logical expression $F$, we
interpret $[F]$ as~$1$ if $F$ is true and as~$0$ if it is false.  Given
a graph $G$, we write $V(G)$ to denote its set of vertices and $E(G)$
to denote its set of edges. For a vertex $v$, by $d(v)$ we mean its
degree (i.e., the number of edges that touch it).

An election $E = (C,V)$ consists of a set of candidates
$C = \{c_1, \ldots, c_m\}$ and a collection of voters
$V = (v_1, \ldots, v_n)$, where each voter $v_i$ has a set
$A(v_i) \subseteq C$ of candidates that he or she approves. We refer
to this set as~$v_i$'s approval set or $v_i$'s vote,
interchangeably. A multiwinner voting rule $f$ is a function that
given an election $E = (C,V)$ and committee size $k \in [|C|]$ outputs
a family of size-$k$ subsets of~$C$, i.e., a family of winning
committees. Below we describe the rules that we focus on.

Let $E = (C,V)$ be an election and let $k$ be the committee size.
Under the multiwinner approval rule ($\av$), each voter assigns a single
point to each candidate that he or she approves and winning committees
consist of $k$ candidates with the highest scores. Satisfaction
approval voting ($\sav$) proceeds analogously, except that each voter
$v \in V$ assigns $\nicefrac{1}{|A(v)|}$ points to each candidate he
or she approves. In other words, under AV each voter can give a single
point to each approved candidate, but under SAV he or she needs to
split a single point equally among them.

Next we consider the class of Thiele rules, defined originally by
\citet{Thie95a} and discussed, e.g., by
\citet{lac-sko:c:approval-thiele} and
\citet{azi-gas-gud-mac-mat-wal:c:approval-multiwinner}.  Given a
nondecreasing weight function
$w \colon \mathbb{N} \rightarrow \mathbb{R}_+$ such that $w(0) = 0$,
we define the $w$-Thiele score ($w$-$\score$) of a committee 
$S = \{s_1, \ldots, s_k\}$ in election $E$ to be:
\[
  \textstyle w\hbox{-}\score_E(S) = \sum_{v \in V} w(|A(v) \cap S|).
\]
The $w$-Thiele rule outputs all committees with the highest $w$-score.
We require that for each of our weight functions $w$, it is possible
to compute each value $w(i)$ in polynomial time with respect to~$i$. 
Additionally, we focus on functions such that $w(1) = 1$ and for
each positive integer $i$ it holds that
$w(i)-w(i-1) \geq w(i+1)-w(i)$. We refer to such functions, and the
Thiele rules that they define, as $1$-concave.
Three best-known 1-concave Thiele rules include the already defined AV
rule, which uses function $w_\av(t) = t$, the approval-based
Chamberlin--Courant rule ($\ccav$), which uses function
$w_\ccav(t) = [t \geq 1]$, and the proportional approval voting rule
($\pav$), which uses function
$w_\pav(t) = \sum_{i=1}^t\nicefrac{1}{i}$.
%


While it is easy to compute some winning committee under the AV rule
in polynomial time (out of possibly exponentially many), for the other
Thiele rules, including CCAV and PAV, even deciding if a committee
with at least a given score exists is $\np$-hard (see the works of
\citet{pro-ros-zoh:j:proportional-representation} and
\citet{bet-sli-uhl:j:mon-cc} for the case of CCAV, and the works of
\citet{azi-gas-gud-mac-mat-wal:c:approval-multiwinner} and
\citet{sko-fal-lan:j:collective} for the general case).  Hence,
sometimes the following greedy variants of Thiele rules are used ($E$
is the input election and $k$ is the desired committee size):
\begin{enumerate}
\item[] Let $f$ be a $w$-Thiele rule. Its greedy variant, denoted
  Greedy-$f$, first sets $W_0 := \emptyset$ and then executes $k$
  iterations, where for each $i \in [k]$, in the $i$-th iteration it
  computes $W_i := W_{i-1} \cup \{c\}$ such that $c$ is a candidate in
  $C \setminus W_{i-1}$ that maximizes the $w$-score of $W_i$.
  Finally, it outputs $W_k$.  In case of internal ties, i.e., if at
  some iteration there is more than one candidate that the algorithm
  may choose, the algorithm outputs all committees that can be
  obtained for some way of resolving each of these ties. In other
  words, we use the parallel-universes tie-breaking
  model~\citep{con-rog-xia:c:mle}.
\end{enumerate}
When we discuss the operation of some Greedy-$f$ rule on election $E$
and we discuss the situation after its $i$-th iteration, where, so
far, subcommittee $W_i$ was selected, then by the score of a
(not-yet-selected) candidate~$c$ we mean the value
$w\hbox{-}\score_E(W_i \cup \{c\}) - w\hbox{-}\score_E(W_i)$, i.e.,
the marginal increase of the $w$-score that would result from
selecting $c$.
We refer to the greedy variants of CCAV and PAV as Greedy\-CCAV and
GreedyPAV (in the literature, these rules are also sometimes called
\emph{sequential} variants of CCAV and PAV, see, e.g., the book of
\citet{lac-sko:b:multiwinner-approval}).
Given a greedy variant of a 1-concave Thiele
rule, it is always possible to compute at least one of its winning
committees in polynomial time by breaking internal ties arbitrarily.
Further, it is well-known that the $w$-score of this committee is at
least a $1-\nicefrac{1}{e} \approx 0.63$ fraction of the highest
possible $w$-score; this follows from the classic result of
\citet{nem-wol-fis:j:submodular} and the fact that $w$-score is
monotone and submodular.


The Phragm{\'e}n (sequential) rule proceeds as follows  (see, e.g., the
work of \citet{san-elk-lac-fer-fis-bas-sko:c:pjr}):
\begin{enumerate}
\item[] Let $E = (C,V)$ be an election and let $k$ be the committee
  size.  Each candidate costs a unit of currency. The voters start with no
  money, but they receive it continuously at a constant rate. As soon
  as there is a group of voters who approve a certain not-yet-selected
  candidate and who together have a unit of currency, these voters ``buy''
  this candidate (i.e., they give away all their money and the
  candidate is included in the committee). The process stops as soon
  as $k$ candidates are selected. For internal ties, we use the
  parallel-universes tie-breaking.
\end{enumerate}
Method of Equal Shares (MEqS), introduced by \citet{pet-sko:laminar}
and \citet{pet-pie-sko:c:meqs}, is similar in spirit, but gives the
voters their ``money'' up front (we use the same notation as above):
\begin{enumerate}
\item[] Initially, each voter has budget equal to
  $\nicefrac{k}{|V|}$. The rule starts with an empty committee and
  executes up to $k$ iterations as follows (for each voter $v$, let
  $b(v)$ denote $v$'s budget in the current iteration): For each
  not-yet-selected candidate $c$ we check if the voters that approve
  $c$ have at least a unit of currency (i.e.,
  $\sum_{v \in A(c)} b(v) \geq 1$). If so, then we compute value
  $\rho_c$ such that $\sum_{v \in A(c)} \min(b(v), \rho_c) = 1$, which
  we call the per-voter cost of $c$.  We extend the committee with
  this candidate $c'$, whose per-voter cost $\rho_{c'}$ is lowest; the
  voters approving $c'$ ``pay'' for him or her (i.e., each voter
  $v \in A(c')$ gives away $\min(b(v), \rho_{c'})$ of his or her
  budget). In case of internal ties, we use the parallel-universes
  tie-breaking.  The process stops as soon as no candidate can be
  selected.
\end{enumerate}
The above process, referred to as Phase~1 of MEqS, often selects fewer
than $k$ candidates. To deal with this, we extend the committee with
candidates selected by Phragm{\'e}n (started off with the budgets that
the voters had at the end of Phase~1).  We jointly refer to the greedy
rules, Phragm{\'e}n, MEqS, and Phase~1 of MEqS as sequential rules.

We assume that the reader is familiar with basic classes of
computational complexity such as $\p$, $\np$, and $\conp$.  $\sharpp$
is the class of functions that can be expressed as counting accepting
paths of nondeterministic polynomial-time Turing
machines. Additionally, we consider pararameterized complexity classes
such as $\fpt$ and $\wone$. $\sharpwone$ is a parameterized counting
class which relates to $\wone$ in the same way as $\sharpp$ relates to
$\np$~\citep{flu-gro:j:parameterized-counting}.  When discussing
counting problems, it is standard to use Turing reductions: A counting
problem $\#A$ reduces to a counting problem $\#B$ if there is a
polynomial time algorithm that solves $\#A$ in polynomial time,
provided it has oracle access to $\#B$ (i.e., it can solve $\#B$ in
constant time).\footnote{For $\#\wone$, the running time can even be
  larger, but our $\#\wone$-hardness proofs use polynomial-time
  reductions.}
%



\section{Unique Winning Committee}\label{sec:unique}

In this section we consider the problem of deciding if a given
multiwinner rule outputs a unique committee in a given election.
Formally, we are interested in the following problem.

\begin{definition}
  Let $f$ be a multiwinner voting rule.  In the
  $f$-$\uniquecommittee$ problem we are given an election~$E$
  and a committee size $k$, and we ask if $|f(E,k)| = 1$.
\end{definition}

It is a folk result that for AV and SAV this problem is in $\p$ (see
beginning of Section~\ref{sec:counting} for an argument).
%
%
For Thiele rules other than AV, the situation is more intriguing.  In
particular, already the problem of deciding if a given committee is
winning under the CCAV rule is
$\conp$-complete~\citep{son-dey-mis:c:multiwinner-verification}. We
show that for 1-concave Thiele rules other than AV the
$\uniquecommittee$ problem is $\conp$-hard (and we conjecture that the
problem is not in $\conp$).

\begin{proposition}\label{pro:thiele-unique}
  Let $f$ be a 1-concave $w$-Thiele rule other than AV. Then
  $f$-$\uniquecommittee$ is $\conp$-hard.
\end{proposition}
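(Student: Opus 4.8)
The plan is to reduce from a known $\conp$-hard problem. The excerpt tells us that deciding whether a given committee is winning under CCAV is $\conp$-complete, which is the natural engine to use. More precisely, the verification problem — given an election $E$, committee size $k$, and a target committee $S$, decide whether $S \in f(E,k)$ — is $\conp$-hard for the Thiele rules we care about (for CCAV this is exactly the cited result of \citet{son-dey-mis:c:multiwinner-verification}, and for general 1-concave $w$-Thiele rules other than AV it follows from the $\np$-hardness of computing a maximum-score committee). I want to turn an instance of such a verification (or non-verification) problem into a $\uniquecommittee$ instance: the idea is to build a gadget election in which there is one ``obvious'' candidate committee that is always winning, and then arrange things so that this committee is the \emph{unique} winner if and only if some hard condition holds.

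The key construction is to augment the given election with a cheaply-controlled ``decoy'' structure. First I would take the input election and add a block of fresh candidates together with fresh voters designed so that, regardless of the original election, there is a distinguished committee $S^\star$ that attains the maximum $w$-score by a comfortable margin over anything that does not use the decoy candidates. The fresh voters are chosen so that the decoy candidates are strictly preferable on their own, forcing any winning committee to contain all of them and leaving only a few ``free'' slots to be filled from the original candidates. In those remaining slots, the score contributed by the original candidates must exactly mirror the original Thiele score, so that a second winning committee (tying with $S^\star$) exists precisely when the original election admits an alternative top-scoring sub-committee — i.e., precisely when the committee we are verifying is \emph{not} the unique maximizer.

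Concretely, the reduction should be engineered so that $|f(E',k')| = 1$ if and only if the original instance is a ``no''-instance of the verification problem (or, symmetrically, a ``yes''-instance of its complement), giving a many-one reduction from a $\conp$-hard problem and hence $\conp$-hardness of $f$-$\uniquecommittee$. Because the statement must cover \emph{every} 1-concave $w$-Thiele rule other than AV simultaneously, I would phrase the decoy gadget in terms of the generic weight function $w$, using only the defining properties $w(0)=0$, $w(1)=1$, the nondecreasing condition, and the 1-concavity inequality $w(i)-w(i-1)\ge w(i+1)-w(i)$. The point where $w$ being distinct from AV enters is that 1-concavity is \emph{strict} somewhere for any non-AV rule, namely there is an index $i$ with $w(i)-w(i-1) > w(i+1)-w(i)$; this strictness is what lets the gadget distinguish between spreading approvals across many candidates versus concentrating them, which is exactly the feature AV lacks.

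The main obstacle I expect is making the decoy gadget \emph{uniform} across all non-AV 1-concave rules at once: the numerical margins guaranteeing that $S^\star$ is a top committee, and that the only freedom left lies in the original election, depend on $w$ through its marginal differences, and these can be arbitrarily small. I would handle this by locating a concrete index $i^\star$ where strict concavity holds, setting the gadget's counts (numbers of copies of fresh voters and candidates) in terms of that gap rather than assuming any fixed numeric value, and verifying that the construction remains polynomial-size since $w$ is polynomial-time computable. Carefully checking that no unintended tie is introduced by the decoy block itself — and that the reduction preserves the gap in both directions — will be the most delicate bookkeeping, but it is routine once the right $i^\star$ and multiplicities are fixed.
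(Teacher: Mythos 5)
There is a genuine gap at the heart of your reduction: the source problem and the correctness condition do not match. You propose to reduce from the winner-verification problem (is $S \in f(E,k)$?), but the condition your decoy gadget delivers is, by your own description, that a second winning committee in $E'$ exists precisely when ``the original election admits an alternative top-scoring sub-committee'' --- that is, uniqueness in $E'$ is made equivalent to uniqueness of the maximizer in $E$. But that is exactly the $f$-$\uniquecommittee$ question you are trying to prove hard, not the verification question, so the argument is circular rather than a many-one reduction from a known $\conp$-hard problem. (Two further soft spots: verification asks whether $S$ is \emph{a} winner, possibly tied, not whether it is the \emph{unique} maximizer, and your claim that verification is $\conp$-hard for every 1-concave non-AV rule because computing a maximum-score committee is $\np$-hard is itself unproven --- hardness of ``does a committee with score at least $y$ exist'' does not automatically yield $\conp$-hardness of checking a specific given committee; the paper only cites such a result for CCAV.) The missing idea is a mechanism that controls the \emph{number} of maximum-score committees in both directions of the reduction, which your sketch never supplies: forcing all decoys into every winner and leaving ``free slots'' just transfers the uniqueness question back to the original election unchanged.

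The paper's proof obtains this counting control by reducing from \textsc{Independent-Set} to the \emph{complement} of $f$-$\uniquecommittee$. It first pads the input graph $G$ with $k$ fresh vertices adjacent to all old vertices (but not to each other), so the padded graph $G'$ is guaranteed to have a size-$k$ independent set, and has at least two of them iff $G$ has at least one. It then builds an election with one voter per edge (approving both endpoints) and $\delta - d(v_i)$ degree-balancing singleton voters per vertex, so that \emph{every} size-$k$ independent set scores exactly $\delta k$, while any non-independent committee scores at most $(\delta k - 1) + (w(2)-w(1)) < \delta k$ whenever $w(2)-w(1) < 1$. Hence the winning committees are exactly the size-$k$ independent sets of $G'$, and the winner is unique iff $G$ has no size-$k$ independent set. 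Your instinct about where non-AV-ness enters --- strictness of concavity at some index, handled by adding universally approved dummy candidates --- does match the paper's treatment of the degenerate case $w(2)-w(1)=1$ (it adds $t-1$ candidates approved by every voter, where $w(t)-w(t-1)=1$ and $w(t+1)-w(t)<1$, and raises the committee size to $t+k-1$); but without the independent-set gadget and its exact-tie score calibration, the rest of your plan does not assemble into a sound reduction.
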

\begin{proof}
  Let $x = w(2) - w(1)$ and assume, for now, that $x < 1$. We give a
  reduction from \textsc{Independent-Set} to the complement of
  $f$-$\uniquecommittee$.  An instance of \textsc{Independent-Set}
  consists of a graph $G$ and integer $k$, and we ask if there are $k$
  vertices neither of which is connected with the others. Let $G'$ be a graph
  obtained from $G$ by adding $k$ vertices such that each of the new
  vertices is connected to each of the old ones (but the new vertices
  are not connected to each other). If $G$ does not have a size-$k$
  independent set, then $G'$ has a unique one, and if $G$ has at least
  one size-$k$ independent set, then $G'$ has at least two.  Let us
  denote the vertices of $G'$ as $V(G') = \{v_1, \ldots, v_n\}$ and
  its edges as $E(G') = \{e_1, \ldots, e_m\}$.  Let $\delta$ be the
  highest degree of a vertex in $V(G')$. We fix the committee size to
  be $k$ and we form an election $E$ with candidate set $V(G')$ and
  with the following voters:
  \begin{enumerate}
  \item For each edge $e_\ell = \{v_i,v_j\}$ there is a single voter
    who approves $v_i$ and $v_j$. 
  \item For each vertex $v_i$ there are $\delta-d(v_i)$ voters
    approving~$v_i$.
  \end{enumerate}
  Consider a set of $k$ vertices from $V(G')$. If this set is an
  independent set, then interpreted as a committee in election~$E$, it
  has $w$-score equal to $\delta k$.  On the other hand, if $S$ is not
  an independent set, then its score is at most
  $(\delta k-1)+x < \delta k$.
  We know that $G'$ has an independent set of size
  $k$.  If $G$ also has one, then our election has at least two
  winning committees and, otherwise, the winning committee is unique.
  
  Let us now consider the case that $x= 1$.  Since $f$ is not AV, there
  certainly is an integer $t$ such that $w(t)-w(t-1) = 1$ and
  $w(t+1)-w(t) < 1$.  In this case, we modify the reduction by adding
  $t-1$ candidates approved by every voter and changing the committee
  size to be $t+k-1$.
\end{proof}

For greedy variants of Thiele rules (with the natural exception of AV)
and for the Phragm{\'e}n rule, deciding if the winning committee is
unique is $\conp$-complete. Our proof for the greedy variants of
Thiele rules is inspired by a complexity-of-robustness proof for
GreedyPAV, provided by
\citet{fal-gaw-kus:c:greedy-approval-robustness}. For Phragm{\'e}n,
somewhat surprisingly, their robustness proof directly implies our
desired result. We also get analogous result for Method of Equal
Shares and its Phase~1.

\begin{theorem}\label{thm:uniqgreedyhard}
  Let $f$ be a 1-concave $w$-Thiele rule, $f \neq$ AV.
  Greedy-$f$-\textsc{Unique-Comm\-ittee} is $\conp$-complete.
\end{theorem}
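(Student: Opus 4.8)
The plan is to establish membership and hardness separately.

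For membership in $\conp$, it suffices to show that the complementary problem---deciding whether Greedy-$f$ produces at least two winning committees---lies in $\np$. A nondeterministic machine guesses two size-$k$ candidate sets $W \neq W'$ together with orderings of their elements, say $c_1, \ldots, c_k$ and $c'_1, \ldots, c'_k$. It then verifies, in polynomial time, that each ordering is a legal run of the greedy process: at the $i$-th step it recomputes the marginal $w$-gain of every not-yet-chosen candidate with respect to the already-selected prefix and checks that the candidate added next attains the maximum. Since each such check is a direct evaluation of the $w$-score and the values $w(j)$ are computable in polynomial time, the whole verification runs in polynomial time. If both orderings are legal and $W \neq W'$, the machine accepts; this certifies non-uniqueness, so $\uniquecommittee$ is in $\conp$.

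For $\conp$-hardness I would reduce an $\np$-complete problem to the complement of Greedy-$f$-$\uniquecommittee$, following the robustness construction of \citet{fal-gaw-kus:c:greedy-approval-robustness}. The idea is to build an election in which the greedy process encounters internal ties, and to arrange the gadgets so that every way of resolving these ties (under the parallel-universes model of \citet{con-rog-xia:c:mle}) yields the \emph{same} final committee when the source instance is a no-instance, whereas some resolution escapes to a different committee when it is a yes-instance. Thus non-uniqueness of the output coincides exactly with the existence of a solution. Concretely, one introduces instance candidates whose mutual competition is witnessed by a block of voters that, at the moment these candidates are considered, all still have empty intersection with the current committee; each such voter then contributes a marginal gain of exactly $w(1) = 1$ to any instance candidate it approves, \emph{independently} of the particular weight function $w$. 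In this way the combinatorially hard part of the reduction is carried entirely by coverage-style quantities that are identical for every $1$-concave rule, and only the auxiliary machinery pinning down the structure has to be adapted to $w$.

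That auxiliary machinery is where the $1$-concavity of $w$ enters, and adapting it uniformly is the technical heart of the proof. As in the proof of Proposition~\ref{pro:thiele-unique}, I would separate the case $x = w(2)-w(1) < 1$ from the case $x = 1$. When $x < 1$, one calibrates the forced portions of the greedy run by attaching suitable numbers of voters who approve a single candidate, so that the intended candidates are selected, and the intended ties created, with the correct margins. When $x = 1$, one instead prepends $t-1$ candidates approved by every voter---where $t$ is the first index with $w(t)-w(t-1)=1$ and $w(t+1)-w(t)<1$---and enlarges the committee size accordingly, exactly as in Proposition~\ref{pro:thiele-unique}, so that every instance voter sits at an intersection level at which the marginal-gain differences behave as required.

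The main obstacle is the design of the divergence gadget: one must guarantee that in the no-instance case the branches created by the internal ties \emph{reconverge}, i.e., all legal greedy runs end with the same set of $k$ candidates, while a single solution of the source instance suffices to make two runs diverge. Controlling this global effect of local tie-breaking choices---and simultaneously ensuring that no \emph{spurious} tie at an unintended step accidentally creates a second committee in a no-instance---is the delicate part. Pushing the instance-dependent interactions to intersection level $0$, as above, is what makes the argument insensitive to the choice of $1$-concave $w$ and lets a single construction cover all rules $f \neq \av$ at once.
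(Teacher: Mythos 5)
Your membership argument is sound and essentially matches the paper's: the paper computes one committee $W$ deterministically and then nondeterministically explores all tie-resolutions, rejecting on any path that ends elsewhere, while you guess two legal runs with distinct outputs; both are valid polynomial-time-verifiable certificates of non-uniqueness, since each greedy step can be checked by recomputing marginal $w$-gains.

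The hardness half, however, has a genuine gap: the divergence gadget---which you yourself call the technical heart---is never constructed, and the two structural claims you do commit to are both wrong. First, your case split on $x = w(2)-w(1) < 1$ versus $x = 1$ is imported from the proof of Proposition~\ref{pro:thiele-unique}, but it is the wrong dichotomy for the greedy setting. The paper's construction hinges on a strict inequality between \emph{second} differences, $\delta_1-\delta_2 > \delta_2-\delta_3$ where $\delta_i = w(i)-w(i-1)$: an edge voter approving $\{v_i,v_j,d\}$ lowers $d$'s marginal score by $\delta_1-\delta_2$ when the first endpoint enters the committee and by $\delta_2-\delta_3$ when the second does, so selecting an independent set costs $d$ exactly $3k(\delta_1-\delta_2)t$ (yielding an exact tie between $d$ and a rival $p$), while a non-independent selection costs strictly less and forces $d$---and this separation works only under the strict second-difference inequality. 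A rule with $w(1)=1$, $w(2)=1.9$, $w(3)=2.5$ is $1$-concave with $x = 0.9 < 1$, yet $\delta_1-\delta_2 = 0.1 < 0.3 = \delta_2-\delta_3$, so your ``$x<1$'' branch contains rules your sketch cannot handle; the correct repair in that regime is to find the first $t$ with $\delta_{t+1}-\delta_{t+2} > \delta_{t+2}-\delta_{t+3}$ (such $t$ exists, since otherwise $f$ is AV or $w$ fails to be nondecreasing) and pad with $t$ universally approved candidates---a different condition and a different $t$ than your $x=1$ padding. Second, your central design principle---confining all instance-dependent interactions to intersection level $0$ so that marginal gains equal $w(1)=1$ independently of $w$---is self-defeating: the hard combinatorial structure must be encoded by \emph{overlapping} approval sets, and the moment one instance candidate is selected, its approvers sit at level $\geq 1$, so every subsequent marginal comparison involves $w(2)-w(1)$ and $w(3)-w(2)$, exactly the $w$-dependent quantities you hoped to quarantine. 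Indeed, in the paper it is precisely these level-$1$/level-$2$ drops of the distinguished candidate $d$ that detect non-independence; a purely level-$0$ gadget degenerates to greedy-AV-style approval counting and cannot produce the exact-tie-versus-strict-separation dichotomy the reduction needs.
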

\begin{proof}
    Membership in $\conp$ is clear: Given an election and committee size,
  we run the greedy algorithm breaking the ties arbitrarily, and we
  compute some winning committee $W$. Then, we rerun the same
  algorithm nondeterministically, at each internal tie trying each
  possible choice; if a given computation completes with a committee
  different than $W$ then it rejects (and the whole computation
  rejects; indeed, we found two different winning committees) and
  otherwise it accepts (if all paths accept, then the whole
  computation accepts; indeed, all ways of handling the internal ties
  lead to the same final committee). In the following, we focus on
  showing $\conp$-hardness.\smallskip
  
  Let $\delta_1 = w(1)-w(0)$, $\delta_2 = w(2)-w(1)$, and $\delta_3 = w(3)-w(2)$.
  For example, for $w_\pav$ we would have $\delta_1 = 1$, $\delta_2 = \frac{1}{2}$ and $\delta_3 = \frac{1}{3}$. By our assumptions
  on weight functions, we know that (a)~these numbers are rational, 
  (b)~$\delta_1 = 1$ (but we
  will not use this), and that 
  (c)~$\delta_1 \geq \delta_2 \geq \delta_3$. We additionally assume that
  $\delta_1 - \delta_2 > \delta_2 - \delta_3$, but later we will show how to relax this assumption.

  We give a reduction from \textsc{Independent Set} to the complement
  of Greedy-$f$-$\uniquecommittee$. Our input consists of a
  graph $G$, where $V(G) = \{v_1, \ldots, v_n\}$ and
  $E(G) = \{e_1, \ldots, e_m\}$, and an integer $k$. The question is if
  there are $k$ vertices in $V(G)$ that are not connected by an edge.
  Without loss of generality, we assume that $G$ is $3$-regular, i.e.,
  each vertex touches exactly three edges~\citep{gar-joh-sto:j:simplified-np-complete}.
  Let~$\alpha$ be a positive integer such that $\alpha \delta_1$ and
  $\alpha\frac{\delta_1-\delta_2}{\delta_1}$ are integers and
  $\alpha (\delta_1-\delta_2) > \delta_1$. We fix values
  $t = \alpha (nmk)^3$, $T = 10\alpha(nmk)^6$, and $D=\beta T^{10}$, where $\beta$ is the smallest
  positive integer greater than $\frac{\delta_1}{\delta_1-\delta_2}$; while we could choose
  smaller ones, these suffice.

  We form an election where the candidate set is $V(G) \cup \{p,d\}$ and
  we have the following three groups of voters:
  \begin{enumerate}
  \item For each edge $e_\ell = \{v_i,v_j\}$, we have $t$ voters
    with approval set $\{v_i,v_j,d\}$.
  \item For each candidate $v_i$ we have $D + T^3 + ((m-1)n+1)T-3t$ voters with approval
    set $\{v_i\}$, and for each pair of distinct candidates $v_i$ and
    $v_j$ we have $T$ voters with approval set $\{v_i,v_j\}$.
  \item We have $D+T^3 + nmT + 1 - \frac{(\delta_1-\delta_2)}{\delta_1}kT -mt$ voters with
    approval set $\{p,d\}$, $\frac{3(\delta_1-\delta_2)}{\delta_1}kt$ voters who
    approve $d$, and $mt$ voters who approve $p$.
  \end{enumerate}
  We let the committee size be $n+1$. We claim that if~$G$ contains an
  independent set of size~$k$ then there are two Greedy-$f$ winning
  committees, $V(G) \cup \{d\}$ and $V(G) \cup \{p\}$, and otherwise
  there is only one, $V(G) \cup \{d\}$. The proof follows.

  Let $X =\delta_1D + \delta_1T^3 + \delta_1nmT$.  Prior to the first iteration of Greedy-$f$, each
  candidate $v_i$ has score $X$, candidate $d$ has score:
  $
    X - (\delta_1-\delta_2)kT + 3(\delta_1-\delta_2)kt + \delta_1. 
  $
  and candidate $p$ has score:
  $
    X - (\delta_1-\delta_2)kT + \delta_1.
  $
  During the first $k$ iterations, Greedy-$f$ selects some $k$
  candidates from $V(G)$.  This is so, because whenever some
  candidate~$v_i$ is selected, the scores of the remaining members
  of~$V(G)$ decrease by $(\delta_1-\delta_2)T$ due to the voters in the second
  group, and by at most $(\delta_1-\delta_2)t$, due to the voters in the first
  group.  Hence, after the first $k-1$ iterations each remaining
  candidate from $V(G)$ has score at least
  $
    X - (\delta_1-\delta_2)(k-1)T -(\delta_1-\delta_2)(k-1)t,
  $
  which---by our choices of $\alpha$, $t$, and $T$---is larger than
  the scores that both $p$ and $d$ had even prior to the first
  iteration (note that the scores of the candidates cannot increase
  between iterations). On the other hand, after the $k$-th iteration,
  each remaining member of $V(G)$ has score at most $X - (\delta_1-\delta_2)kT$,
  which is less than $p$ has (since $p$ is only approved by voters who
  do not approve members of $V(G)$, at this point his or her score is the
  same as prior to the first iteration).  As a consequence, in the
  $(k+1)$-st iteration Greedy-$f$ either chooses $p$ or $d$. Let us now
  analyze which one of them.

  Let $S$ be the set of candidates from $V(G)$ selected in the first~$k$
  iterations. If $S$ forms an independent set, then prior to the
  $(k+1)$-st iteration, the score of $d$ is $X - (\delta_1-\delta_2)kT +
  \delta_1$. This is so, because for each candidate $v_i$ in $S$, $d$ loses
  exactly $3(\delta_1-\delta_2)t$ points due to the voters in the first group
  that correspond to the three edges that include $v_i$ (since $S$ is
  an independent set, for each member of $S$ these are different three
  edges). In this case, Greedy-$f$ is free to choose either among $p$
  and $d$. However, if $S$ is not an independent set, then the score
  of $d$ drops by at most $(3k-1)(\delta_1-\delta_2)t + (\delta_2-\delta_3)t$. This is so,
  because $S$ contains at least two candidates $v_i$ and $v_j$ that
  are connected by an edge; when the second one of them is included in
  the committee, then the score of $d$ drops by at most
  $2(\delta_1-\delta_2)+(\delta_2-\delta_3)$. In this case Greedy-$f$ is forced to select
  $d$ in the $(k+1)$-st iteration.  In the following $n-k$ iterations,
  $f$ selects the remaining members of $V(G)$ (after either $p$ or $d$ is
  selected in the $(k+1)$-st iteration, the score of the other one
  drops so much that he or she cannot be selected; this is due to the $D$ voters who approve $\{p,d\}$).

%
%
%


  It remains to observe that if $G$ contains an independent set of
  size~$k$, then Greedy-$f$ can choose its members in the first~$k$
  iterations.  This is the case, because whenever Greedy-$f$ chooses a
  member of the independent set, then the score of its other members
  never drops more than the score of the other remaining vertex
  candidates. Hence, if~$G$ has a size-$k$ independent set, then, due
  to the parallel-universes tie-breaking, Greedy-$f$ outputs two
  winning committees, $V(G) \cup \{p\}$ and $V(G) \cup
  \{d\}$. Otherwise we have a unique winning committee
  $V(G) \cup \{d\}$.  This completes the proof for the case that
  $\delta_1 - \delta_2 > \delta_2-\delta_3$.


Let us now consider the case where
$\delta_1 - \delta_2 \leq \delta_2-\delta_3$.  Let
$\delta_4 = w(4)-w(3)$, $\delta_5 = w(5)-w(4)$, and so on.  If there
is some positive integer $t$ such that
$\delta_{t+1} - \delta_{t+2} > \delta_{t+2} - \delta_{t+3}$ then it
suffices to use the same reduction as above, extended so that we have
candidates $d_1, \ldots, d_t$ that are approved by every voter and the
committee size is increased by $t$. Greedy-$f$ will choose these~$t$
candidates in the first~$t$ iterations and then it will continue as
described in the reduction, with $\delta_{t+1}, \delta_{t+2}$, and
$\delta_{t+3}$ taking the roles of $\delta_1$, $\delta_2$, and
$\delta_3$.  In fact, such a $t$ must exist. Otherwise, if
$\delta_{t+1} - \delta_{t+2} \leq \delta_{t+2} - \delta_{t+3}$ for
every $t$ then either $f$ is AV (which we assumed not to be the case)
or $w$ is not nondecreasing, which is forbidden by definition.
\end{proof}

\begin{corollary}\label{cor:cc-pav-phragmen}
  $\uniquecommittee$ is $\conp$-complete for GreedyCCAV,
  GreedyPAV, and $\phragmen$.
\end{corollary}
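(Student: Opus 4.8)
The plan is to split the three rules into two groups. For $\greedyccav$ and $\greedypav$ I would simply invoke Theorem~\ref{thm:uniqgreedyhard}, so the only thing to check is that the underlying weight functions $w_\ccav$ and $w_\pav$ satisfy its hypotheses, i.e., that they are $1$-concave and distinct from $w_\av$. For $w_\ccav(t) = [t \geq 1]$ the marginal differences are $1, 0, 0, \ldots$, and for $w_\pav(t) = \sum_{i=1}^t \nicefrac{1}{i}$ they are $1, \nicefrac{1}{2}, \nicefrac{1}{3}, \ldots$; in both cases the sequence is nonincreasing and starts at $w(1) = 1$, so both rules are $1$-concave, while neither coincides with $w_\av(t) = t$ (whose marginals are all equal to $1$). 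Hence Theorem~\ref{thm:uniqgreedyhard} applies directly and yields $\conp$-completeness for these two rules.

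For $\phragmen$, membership in $\conp$ follows by the same nondeterministic verification strategy used for the greedy rules: first run $\phragmen$ once with an arbitrary fixed tie-breaking to obtain some winning committee $W$ in polynomial time, then rerun the sequential process nondeterministically, branching at every internal tie; a path rejects if it terminates with a committee different from $W$, so the whole computation accepts iff every tie-breaking yields $W$, i.e., iff the winning committee is unique. The only subtlety is arguing that each path runs in polynomial time despite the continuous money-accumulation dynamics, but this is standard: the relevant ``buying'' times are determined by solving a bounded number of linear equations, so the $k$ selection events can be computed exactly in polynomial time.

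For the $\conp$-hardness of $\phragmen$, my plan is to reuse the reduction underlying the robustness analysis of $\phragmen$ given by \citet{fal-gaw-kus:c:greedy-approval-robustness} rather than to build a fresh construction from scratch. The idea is that their reduction (from an $\np$-hard problem such as \independentset) already produces $\phragmen$ elections in which the presence of a solution to the source instance manifests as an internal tie that, under parallel-universes tie-breaking, yields two distinct winning committees, whereas the absence of a solution forces a single winning committee. I would therefore verify that their construction has exactly this ``unique iff no-instance'' behavior and that the committee size and voter multiplicities can be fixed so that the equality of per-voter loads needed to create the decisive tie holds precisely on solutions of the source instance.

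I expect this last verification---pinning down that the critical tie in the $\phragmen$ load-balancing process occurs if and only if the source instance is a yes-instance---to be the main obstacle, since $\phragmen$'s continuous dynamics are more delicate to control than the discrete marginal-score updates of the greedy rules; once it is established, the rest reduces to bookkeeping already carried out in the cited paper, and combining it with the membership argument gives the claimed $\conp$-completeness.
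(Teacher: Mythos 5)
Your proposal matches the paper's own argument: the paper likewise obtains $\greedyccav$ and $\greedypav$ directly from Theorem~\ref{thm:uniqgreedyhard}, and for $\phragmen$ it invokes exactly the robustness reduction of \citet{fal-gaw-kus:c:greedy-approval-robustness}, observing---as you do---that in their construction adding approvals only affects how internal ties are broken, so the same election has two winning committees under parallel-universes tie-breaking iff the source instance is a yes-instance. The verification step you flag as the main obstacle is precisely what the paper asserts the cited proof already provides, and your membership argument for $\phragmen$ (nondeterministic re-simulation with branching at internal ties) is the same one the paper uses for the sequential rules.
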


The results for GreedyCCAV and GreedyPAV follow directly from the
preceding theorem. For Phragm{\'e}n,
\citet{fal-gaw-kus:c:greedy-approval-robustness} have shown that the
following problem, known as
Phragm{\'e}n-\textsc{Add-Robustness-Radius}, is $\np$-complete: Given
an election~$E$, committee size~$k$, and number~$B$, is it possible to
add at most $B$ approvals to the votes so that the winning committee
under the resolute variant of the Phragm{\'e}n rule (where all
internal ties are resolved according to a given tie-breaking order)
changes. Their proof works in such a way that adding approvals only
affects how ties are broken. Hence, effectively, it also shows that
$\uniquecommittee$ is $\conp$-complete for the (non-resolute) variant
of Phragm{\'e}n.

\begin{theorem}\label{thm:meqs-p1-uni}
  $\uniquecommittee$ is $\conp$-complete for Phase~1 of MEqS.
\end{theorem}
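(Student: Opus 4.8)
The plan is to reuse the two-part structure of the proof of Theorem~\ref{thm:uniqgreedyhard}. For membership in \conp\ I would argue exactly as for the greedy rules: first run Phase~1 of MEqS breaking every internal tie arbitrarily to obtain one winning committee $W$, and then rerun the rule nondeterministically, branching at each internal tie over all candidates of minimal per-voter cost $\rho_c$. Each branch finishes after at most $k$ iterations, each offering at most $|C|$ choices, so the computation tree is polynomially bounded; a branch rejects if its final committee differs from $W$ and accepts otherwise. The whole computation accepts iff every tie-resolution yields $W$, i.e., iff $|f(E,k)| = 1$, which places the problem in \conp.

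For \conp-hardness I would give a reduction from \independentset\ to the complement of the problem, mirroring the construction for the greedy rules but translating ``marginal $w$-score'' into ``per-voter cost $\rho_c$''. As before I may assume the input graph $G$ is $3$-regular. The candidate set is $V(G) \cup \{p,d\}$ and the committee size is $n+1$, so that a winning committee must consist of all of $V(G)$ together with exactly one of $p$ and $d$; the two potential winners are therefore $V(G)\cup\{p\}$ and $V(G)\cup\{d\}$. The voters come in three analogous groups: edge-voters approving $\{v_i,v_j,d\}$ for each edge $e_\ell=\{v_i,v_j\}$, vertex- and pair-voters supporting the $v_i$'s, and a large block of $\{p,d\}$-voters (together with $p$-only and $d$-only voters) whose role is to make the loser of the $p$/$d$ contest unaffordable once the winner is bought. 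The multiplicities of these groups, and the common initial budget $k/|V|$, would be chosen so that in the first $k$ iterations the cheapest candidates are always members of $V(G)$ and, if $G$ has a size-$k$ independent set, one such set is among the admissible choices; the decisive $(k+1)$-st iteration then buys either $p$ or $d$, and the remaining $n-k$ iterations buy the rest of $V(G)$.

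The heart of the reduction is to arrange the per-voter costs so that, after a size-$k$ set $S\subseteq V(G)$ has been bought, the costs $\rho_p$ and $\rho_d$ are \emph{equal} exactly when $S$ is independent and differ (forcing a unique choice) otherwise. The lever is again the edge-voters: buying a vertex drains budget from the three edge-voters incident to it, and this drainage affects $\rho_d$ (since those voters approve $d$) but never $\rho_p$. When $S$ is independent every edge-voter pays through at most one endpoint, whereas when $S$ contains both endpoints of some edge that edge-voter pays twice and is drained more; the resulting gap in $d$'s remaining support, which plays the role of the inequality $\delta_1-\delta_2>\delta_2-\delta_3$ in the greedy proof, is what tips $\rho_d$ strictly away from $\rho_p$. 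Consequently, if $G$ has a size-$k$ independent set then, under parallel-universes tie-breaking, both $V(G)\cup\{p\}$ and $V(G)\cup\{d\}$ are output, and if it does not then every admissible first $k$ choices leave the $p$/$d$ contest strict and a single committee wins.

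The main obstacle, and the place where the MEqS proof genuinely departs from the greedy one, is that $\rho_c$ is a nonlinear, budget-dependent quantity defined implicitly through $\sum_{v\in A(c)}\min(b(v),\rho_c)=1$, so the clean additive bookkeeping of marginal $w$-scores is unavailable. I expect most of the work to go into (i) choosing the voter multiplicities and a scale of dominating constants (analogous to $t$, $T$, and $D$) so that the cost gaps between $V(G)$-candidates and $\{p,d\}$ point the right way both before and after the first $k$ rounds; (ii) verifying that no voter's budget is exhausted prematurely, so that every intended purchase is affordable and the truncating $\min$ in the definition of $\rho_c$ stays inactive wherever we reason as if the costs were linear; and (iii) ensuring that Phase~1 does not halt before all $n+1$ candidates are bought (unlike the greedy rules, Phase~1 may stop early), which is what pins the winning committees to the claimed form. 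Once these quantitative estimates are in place, the equivalence between ties and size-$k$ independent sets follows exactly as above.
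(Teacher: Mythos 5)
Your \conp-membership argument matches the paper's (compute one committee $W$ with arbitrary tie-breaking, then branch nondeterministically over all internal ties and reject on any path producing a different committee), though your side remark that ``the computation tree is polynomially bounded'' is false---the tree can have up to $|C|^k$ paths; what membership actually needs, and what holds, is that each individual path runs in polynomial time.

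The hardness half, however, has a genuine gap, and it sits exactly where you locate the ``heart'' of your reduction. Your plan to mirror the greedy-Thiele construction assumes that buying vertex candidates degrades $d$'s standing in a quasi-additive way, with once-vs-twice drainage of edge-voters playing the role of $\delta_1-\delta_2 > \delta_2-\delta_3$. But in Phase~1 of MEqS the per-voter cost $\rho_c$ solves $\sum_{v \in A(c)}\min(b(v),\rho_c)=1$, so as long as every supporter's budget exceeds $\rho_c$ we simply get $\rho_c = \nicefrac{1}{|A(c)|}$, \emph{independently of prior purchases}. In particular, each vertex purchase removes the same total amount of money from $d$'s supporter pool (each bought vertex charges its three incident edge-voter blocks once) whether or not $S$ is independent; only the \emph{distribution} of the drainage differs, and that distribution influences $\rho_d$ only where the $\min$ truncates, i.e., only once some supporters' budgets are driven strictly below the relevant cost. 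So the additive bookkeeping you hope to transplant has no counterpart: truncation is the \emph{only} lever, and your proposal never engineers it. Worse, your frame---committee size $n+1$ with winning committees $V(G)\cup\{p\}$ and $V(G)\cup\{d\}$---forces total expenditure $n+1$ to equal the total budget $k=n+1$ exactly, meaning every voter in every parallel-universe branch must end with budget precisely zero; with singleton vertex-voters, shared pair-voters, and edge-voters approving $d$, this is a knife-edge condition that the overlapping gadget cannot satisfy uniformly, and your obstacle (iii) cannot be patched away.

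The paper avoids all of this by \emph{embracing} early halting rather than fighting it: it reduces from the complement of \xthreec\ (not \independentset), sets the committee size equal to the number of voters so each voter starts with budget exactly $1$, and lets Phase~1 output committees strictly smaller than $k$. Initial ``sink'' candidate groups ($C_B$, $C_U$) calibrate the budgets of the relevant voter blocks to $\nicefrac{1}{12n}$ and $\nicefrac{1}{18n+6}$; $S$-candidates bought at per-voter cost $\nicefrac{1}{18n+6}$ necessarily correspond to disjoint triples because they zero out the budgets of the element-voters they touch (truncation is built in); and the decisive tie is then between $p$ at cost $\nicefrac{1}{12n}$ and $d$, approved by the same $V_{pd}$ voters plus the element-voters $U'$, whose cost equals $p$'s exactly when every $U'$-voter has been zeroed, i.e., when the purchased $S$-candidates form an exact cover. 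This exact-cover mechanism is precisely the truncation-based tie trigger your sketch lacks, and it is why the MEqS proof is structurally a different reduction rather than a translation of Theorem~\ref{thm:uniqgreedyhard}.
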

\begin{proof}
  The following nondeterministic algorithm shows membership in
  $\conp$: First, we deterministically compute the output of Phase~1,
  breaking internal ties in some arbitrary way. This way we obtain
  some committee $W$.  Next we rerun Phase~1, at each internal tie
  nondeterministically trying all possibilities. We accept on
  computation paths that output~$W$ and we reject on those outputting
  some other committee. This algorithm accepts on all computation
  paths if and only if the rule has a unique winning committee.

  Next, we give a reduction from the complement of the classic
  $\np$-complete problem, \textsc{X3C}.  An instance of \textsc{X3C}
  consists of a universe set $U = \{u_1, \ldots, u_{3n}\}$ and a
  family $\calS = \{S_1, \ldots, S_{3n}\}$ of size-$3$ subsets of
  $U$. We ask if there are $n$ sets from $\calS$ whose union is $U$
  (we refer to such a family as an exact cover of $U$; note that the
  sets in such a cover must be disjoint). Without loss of generality,
  we assume that each member of $U$ belongs to exactly three sets from
  $\calS$~\citep{gon:j:x3c} and that $n$ is even.

  Now we describe our election. Ideally, we would like to distribute
  different amounts of budget between different voters, but as MEqS
  splits the budget evenly, we design the election in such a way that
  in the initial iterations the respective voters spend appropriate
  amounts of money on the candidates that otherwise are not crucial
  for the construction. We form the following groups of voters (we
  reassure the reader that the analysis is more pleasant than the
  following two enumerations may suggest):
  \begin{enumerate}
  \item Group $B$, which contains  $144n^3 - 12n$ voters.
  \item Group $B_U$, which contains
    $54n^3 + 9n^2$ voters.
  \item Group $U'$, which models the elements of the universe set~$U$.
    For each $u_i \in U$, there is a single corresponding voter in
    $U'$.  We have $|U'|=3n$.
  \item Group $U''$, which serves a similar purpose as $U'$, but
    contains more voters. Specifically, for each $u_i \in U$, there
    are $6n$ corresponding voters in $U''$;  $|U''| = 18n^2$.

  \item Group $V_{pd}$, which contains $12n$ voters.
  \item Group $V_S$, which contains $9n$ voters.
  \item Two voters, $d_1$ and $d_2$.
  \end{enumerate}
  In total, there are $198n^3 + 27n^2 + 12n + 2$ voters. Further, we
  have the following groups of candidates:

  \begin{enumerate}
  \item Group $C_B$ of $144n^3 - 12n^2$ candidates approved by the
    $144n^3$ voters from $B \cup V_{pd}$.
  \item Group $C_U$ of $54n^3 + 24n^2 + \nicefrac{5n}{2}$ candidates
    approved by the $54n^3 + 27n^2 + 3n$ voters from $B_U \cup U' \cup U''$.
  \item Candidate $p$ approved by the $12n$ voters from $V_{pd}$.

  \item Candidate $d$ approved by the $15n$ voters from
    $V_{pd} \cup U'$.
  \item Candidates $c_1$ and $c_2$, both approved by $d_1$ and $d_2$.
  \item Group $D$ of
    $15n^2 + \frac{45n}{2} + 5$ candidates approved by $d_1$.
  \item For each set $S_\ell \in \calS$ such that
    $S_\ell = \{u_i,u_j,u_t\}$ we have a corresponding candidate
    $s_\ell$ approved by: (a) three unique voters from $V_S$, (b) the
    voters from $U'$ and $U''$ that correspond to the elements $u_i$,
    $u_j$, $u_t$.  We write $S$ to denote this group of candidates and
    we refer to its members as the $S$-candidates. Each $S$-candidate is
    approved by $3 + 3 + 3 \cdot 6n = 18n + 6$ voters.
  \end{enumerate}

  We have $198n^3 + 27n^2 + 28n + 9$ candidates in total.  We set the
  committee size $k$ to be equal to the number of voters, i.e.,
  $k = 198n^3 + 27n^2 + 12n + 2$.
  Let us consider the following two committees (note that each of them
  contains fewer than $k$ candidates; indeed, Phase~1 of MEqS
  sometimes chooses committees smaller than requested):
  \begin{align*}
    W_d &= C_B \cup C_U \cup S \cup \{c_1,c_2\} \cup \{d\}, \\
    W_p &= C_B \cup C_U \cup S \cup \{c_1,c_2\} \cup \{p\}.
  \end{align*}
  We claim that Phase~1 of MEqS always outputs committee $W_d$, and if
  $(U,\calS$) is a \emph{yes}-instance then it also outputs $W_p$.

  Let us analyze how Phase~1 of MEqS proceeds on our election.  Since
  the committee size is equal to the number of voters, initially each
  voter receives budget equal to $1$.

  At first, we will select all candidates from $C_B$. Indeed, there
  are $144n^3-12n^2$ candidates in this group, each approved by
  $144n^3$ voters (from $B \cup V_{pd}$). Each of these voters pays
  $\nicefrac{1}{144n^3}$ for each of the candidates (this is the
  lowest per-voter candidate cost at this point).  After these
  purchases, each voter from $B \cup V_{pd}$ will be left with budget
  equal to
  $1 - (144n^3 - 12n^2) \cdot (\nicefrac{1}{144n^3}) = \nicefrac{1}{12n}$.

  Next, we will select all candidates from $C_U$.  Indeed, this set
  contains $54n^3+24n^2 + \nicefrac{5n}{2}$ candidates approved by
  $54n^3+27n^2+3n$ voters (from $B_U \cup U' \cup U''$) who have not
  spent any part of their budget yet. All candidates in $C_U$ will be
  purchased at the same pre-voter cost of
  $\nicefrac{1}{(54n^3+27n^2+3n)}$ (the lowest one at this
  point). Each voter in $B_U \cup U' \cup U''$ will be left with
  budget equal to
  $1 - (54n^3 + 24n^2 + \nicefrac{5n}{2}) \cdot \nicefrac{1}{(54n^3 +
    27n^2 + 3n)} = \frac{3n^2 + \nicefrac{n}{2}}{54n^3 + 27n^2 + 3n} =
  \frac{6n+1}{108n^2 + 54n + 6} = \frac{6n+1}{(6n+1) \cdot (18n+6)} =
  \nicefrac{1}{(18n+6)}$.

  Next, we consider the $S$-candidates who, at this point, have the
  highest approval score among the yet unselected candidates. As each
  $S$-candidate is approved by exactly $18n+6$ voters and each voter
  still has budget higher or equal to $\nicefrac{1}{(18n+6)}$, we keep
  selecting the $S$-candidates at the per-voter cost of
  $\nicefrac{1}{(18n+6)}$ as long as there is at least one such candidate
  whose all voters still have budget of at least
  $\nicefrac{1}{(18n+6)}$.

  Upon selecting a given $S$-candidate, corresponding to set $S_\ell$,
  all the voters who approve him or her pay
  $\nicefrac{1}{(18n+6)}$. This includes the three unique voters from
  $V_S$ and the voters from $U'$ and $U''$ who correspond to the
  members of $S_\ell$. Prior to this payment, the voters from $U'$ and
  $U''$ have budget equal to $\nicefrac{1}{(18n+6)}$, so they
  end up with $0$ afterward (and we say that they are \emph{covered}
  by this $S$-candidate).  Consequently, the $S$-candidates that we
  buy at the per-voter cost of $\nicefrac{1}{(18n+6)}$ correspond to
  disjoint sets.

  Now let us consider what happens when there is no $S$-candidate left
  who can be purchased at the per-voter cost of
  $\nicefrac{1}{(18n+6)}$. This means that for each unselected $S$
  candidate, at least $6n+1$ voters approving him have already been
  covered and have no budget left. Hence, for a given $S$-candidate
  there are at least $6n+1$ voters (from $U'$ and $U''$) whose budget
  is~$0$, at most $12n+2$ voters (from $U'$ and $U''$) who each have
  budget of $\nicefrac{1}{(18n+6)}$, and three voters (from $V_S$) who
  each have budget equal to $1$. To buy this $S$ candidate, the voters
  from $U'$ and $U''$ would have to use up their whole budget, and the
  voters from $V_S$ would have to pay at least:
  \[
    \textstyle \frac{1}{3}(1 - (12n+2) \cdot \frac{1}{18n+6}) = \frac{18n+6 -
      (12n+2)}{3 \cdot (18n+6)} = \frac{6n+4}{54n+18}
  \]
  each.  However, at this point there are two candidates that can be
  purchased at lower per-voter cost.

  Indeed, candidate $p$ could be purchased by the $12n$ voters from
  $V_{pd}$ at the per-voter cost of $\nicefrac{1}{12n}$ (after buying
  the candidates from $C_B$, they still have exactly this amount of
  budget left). Since candidate $d$ also is approved by all the voters
  from $V_{pd}$, and also by the voters from $U'$, candidate $d$ would
  either have the same per-voter cost as $p$ (in case all the members
  of $U'$ were already covered) or would have an even lower per-voter
  cost. The only other remaining candidates are $c_1$, $c_2$, and the
  candidates from $D$, but their per-voter costs are greater or equal
  to $\nicefrac{1}{2}$.  Hence, at this point, MEqS either selects $p$
  or $d$. The former is possible exactly if the already selected
  $S$-candidates form an exact cover of $U'$ (and, hence, correspond
  to an exact cover for our input instance of \textsc{X3C}).

  If we select $p$, then the $12n$ voters from $V_{pd}$ use up all
  their budget. The remaining voters who approve $d$, those in $U'$,
  have total budget equal to at most $3n \cdot \frac{1}{18n+6} < 1$,
  so $d$ cannot be selected in any of the following iterations (within
  Phase~1).  On the other hand, if we select $d$, then all the voters
  from $U'$ would have to pay all they had left (that is, either $0$
  or $\frac{1}{18n+6}$, each) and voters from $V_{pd}$ would split the
  remaining cost. That is, each voter from $V_{pd}$ would have to pay
  at least:
  \[
    \textstyle \frac{1 - 3n \cdot \frac{1}{18n+6}}{12n} = \frac{18n+6
      - 3n}{12n \cdot (18n+6)} = \frac{15n+6}{12n \cdot (18n+6)}.
  \]
  Consequently, each voter from $V_{pd}$ would be left with at most:
  \[ \textstyle
    \frac{1}{12n} - \frac{15n+6}{12n \cdot (18n+6)} = \frac{18n+6 -
      (15n+6)}{12n \cdot (18n+6)} = \frac{1}{72n+24}.
  \]
  This would not suffice to purchase $p$, as
  $12n \cdot \frac{1}{72n+24} < 1$.  Thus either we select $d$ (and
  not~$p$) or we select $p$ (and not $d$; where this is possible only
  if we previously purchased $S$-candidates that cover all members of
  $U'$).

  In the following iterations, we purchase all remaining
  $S$-candidates (because each of them is approved by three unique
  voters from $V_S$), as well as candidates $c_1$ and $c_2$ (voters
  $d_1$ and $d_2$ buy them with per-voter cost of $\nicefrac{1}{2}$
  for each). This uses up the budget of $d_1$ and, so, no candidate
  from $D$ is selected.  All in all, if there is no exact cover for
  the input \textsc{X3C} instance, then $W_d$ is the unique winning
  committee, but otherwise $W_d$ and $W_p$ tie. This finishes the
  proof.
\end{proof}

$\uniquecommittee$ is also $\conp$-complete for the full version of
MEqS. To see this, it suffices to note that after adding enough voters
with empty votes, MEqS becomes equivalent to Phragm{\'e}n (because
per-voter budget is so low that Phase~1 becomes vacuous) and inherits
its hardness.


On the positive side, for
sequential rules 
we can solve \textsc{Unique-Committee} 
in $\fpt$ time with respect to the committee size: In essence, we
first compute some winning committee and then we try all ways of
breaking internal ties to find a different one.  For small values of
$k$, such as, e.g., $k \leq 10$, the algorithm is fast enough to be
practical.

\begin{theorem}\label{thm:greedy-fpt-unique}
  Let $f$ be MEqS, Phase~1 of MEqS, Phragm{\'e}n, or a greedy variant
  of a $1$-concave Thiele rule. There is an $\fpt$ algorithm for
  $f$-$\uniquecommittee$ parameterized by the committee size.
\end{theorem}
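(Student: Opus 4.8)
The plan is to reduce $f$-$\uniquecommittee$ to a bounded search over the ways of breaking internal ties, pruned so that the branching factor depends only on~$k$. Since $f$ is one of the sequential rules, we can compute one of its winning committees in polynomial time by resolving every internal tie arbitrarily (as noted in the preliminaries); call the resulting committee~$W$ and recall that $|W| \le k$. We then ask whether some run of~$f$ can produce a committee different from~$W$: the answer is ``yes'' exactly when $|f(E,k)| \ge 2$, since $W$ itself is always a valid output. We explore the runs of~$f$ as a search tree whose nodes are the partial committees (prefixes of selections) reachable under parallel-universes tie-breaking; each node stores the current state of the rule (the marginal $w$-scores for greedy Thiele, the voters' budgets for $\phragmen$ and MEqS), which is determined by the history of selections, and its depth equals the number of candidates selected so far, hence it is at most~$k$.

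The key idea, and the step that makes the search $\fpt$, concerns the branching at a node whose tie set (the set of candidates the rule could select next) is~$T$. If $T$ contains any candidate $c \notin W$, then selecting~$c$ and completing the run arbitrarily yields a winning committee that contains~$c$ and therefore differs from~$W$; we may halt and report that the committee is not unique. Otherwise $T \subseteq W$, so $|T| \le |W| \le k$, and we branch on the (at most~$k$) candidates of~$T$. Consequently every branching node has at most~$k$ children and the tree has depth at most~$k$, so it has $k^{O(k)}$ nodes; as the per-node bookkeeping is polynomial, the whole procedure runs in $k^{O(k)} \cdot \poly(|E|)$ time, i.e., in $\fpt$ time with respect to~$k$. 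Whenever a branch reaches a completed run (the committee has size~$k$, or, in the case of Phase~1 of MEqS, no further candidate can be afforded) we also compare the resulting committee with~$W$ and report non-uniqueness if they differ.

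It remains to argue correctness. Soundness is immediate: every committee the search reports is the output of a genuine run and differs from~$W$, so two winning committees exist. For completeness, suppose $W' \neq W$ is produced by some run $R'$, and fix the canonical run $R$ producing~$W$; let~$i$ be the first iteration at which $R'$ selects a candidate different from~$R$. The two runs agree on the first $i-1$ selections, so they reach the same state, and both $R$'s and $R'$'s choices at iteration~$i$ lie in the common tie set~$T_i$. Walking the search tree along the selections of~$R'$, every prefix we traverse is either extended within~$W$ (the case $T \subseteq W$, where $R'$'s choice is among the children we branch on) or exhibits a tied candidate outside~$W$ (the case that triggers an immediate report); and if no such candidate ever appears, then all of $R'$'s selections lie in~$W$, so $W' \subseteq W$, which for the fixed-size rules forces $W' = W$ (a contradiction) and for Phase~1 of MEqS leaves $W' \subsetneq W$, detected when the branch following~$R'$ completes. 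Hence the search necessarily reports non-uniqueness, as needed.

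The main obstacle is precisely this branching bound: the naive ``try all tie-breakings'' search branches over the whole tie set, which may contain $\Theta(m)$ candidates and thus give a tree of size $m^{k}$. The observation that a tied candidate outside~$W$ already certifies non-uniqueness---so that we branch only when the tie set is contained in~$W$, whose size is at most~$k$---is what caps the branching at~$k$ and yields the $\fpt$ bound. The argument is uniform across the listed rules because each is sequential, makes at most~$k$ selections, has a state determined by the history of selections, and uses parallel-universes tie-breaking.
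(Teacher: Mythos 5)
Your proposal is correct and takes essentially the same approach as the paper's proof: compute one winning committee $W$ by breaking internal ties arbitrarily, then re-run the rule, halting as soon as some tie set contains a candidate outside $W$ and branching only when the tie set is a subset of $W$ (hence over at most $k$ candidates per node, at depth at most $k$), which yields the $\fpt$ bound---the paper states $O(k!)$ recursion branches, matching your $k^{O(k)}$. Your explicit end-of-run comparison for Phase~1 of MEqS, where a branch confined to $W$ might terminate early with a proper subset of $W$, is a small point the paper leaves implicit, but it does not change the argument.
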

\begin{proof}
  Let $E$ 
  be the input election and let
  $k$ be the committee size. First, we compute some committee
  $W$ in $f(E,k)$, by running the algorithm for
  $f$ and breaking the internal ties arbitrarily. Next, we
  rerun 
  the algorithm, but whenever it is about to add a candidate into the
  constructed committee, we do as follows (let
  $T$ be the set of candidates that the algorithm can insert into the
  committee): If $T$ contains some candidate
  $c$ that does not belong to
  $W$, then we halt and indicate that there are at least two winning
  committees ($W$ and those that include $c$). If
  $T$ is a subset of
  $W$, then we recursively try each way of breaking the tie. If the
  algorithm completes without halting, we report that there is a
  unique winning committee.
  The correctness is immediate. The running time is equal to
  $O(k!)$ times the running time of the rule's algorithm (for the case
  where each tie is broken in a given way). Indeed, at the first
  internal tie we may need to recurse over at most
  $k$ different candidates, then over at most $k-1$, and so on.
\end{proof}

For $1$-concave Thiele rules other than
$\av$, \textsc{Unique-Committee} is
$\mathrm{co}\hbox{-}\wone$-hard when parameterized by the committee
size (this follows from the proof of
Proposition~\ref{pro:thiele-unique} as \textsc{Independent-Set} is
$\wone$-hard for parameter
$k$). To solve the problem in practice, we note that for each
$1$-concave Thiele rule there is an integer linear program (ILP) whose
solution corresponds to the winning committee.  We can either use the
ability of some ILP solvers to output several solutions (which only
succeeds in case of a tie), or we can use the following strategy:
First, we compute some winning committee using the basic ILP
formulation. Then, we extend the formulation with a constraint that
requires the committee to be different from the previous one and
compute a new one. If both committees have the same score, then there
is a tie.





\section{Counting Winning Committees}\label{sec:counting}

Let us now consider the problem of counting the 
winning committees. Formally, our problem is as follows.

\begin{definition}
  Let $f$ be a multiwinner voting rule. In the
  $f$-\#\textsc{Winning-Committees} problem we are given an election
  and a committee size $k$; we ask for $|f(E,k)|$.
\end{definition}

There are polynomial-time algorithms for computing the number of
winning committees for $\av$ and $\sav$. For an election $E$ with
committee size $k$, we first sort the candidates with respect to their
scores in a non-increasing order and we let $x$ be the score of the
$k$-th candidate.  Then, we let $S$ be the number of candidates whose
score is greater than $x$, and we let $T$ be the number of candidates
with score equal to $x$. There are $T \choose k-S$ winning
committees. 

\begin{proposition}
  $\{$AV, \!\! SAV$\}$-\#\textsc{Winning-Committees} $\in\! \p$
\end{proposition}

On the other hand, whenever $f$-\textsc{Unique-Committee} is
intractable, so is $f$-\#\textsc{Winning-Committees}. Indeed, it
immediately follows that there is no polynomial-time
$(2-\varepsilon)$-approximation algorithm for
$f$-\#\textsc{Winning-Committees} for any $\varepsilon > 0$ (if such
an algorithm existed then it could solve $f$-\textsc{Unique-Committee} in
polynomial time as for an election with a single winning committee it
would have to output~$1$, and for an election with $2$ winning
committees or more, it would have to output an integer greater or
equal at least $\frac{2}{2-\varepsilon} > 1$, so we could distinguish
these cases\footnote{We assume here that if a solution for a counting
  problem is $x \in \mathbb{N}$, then an $\alpha$-approximation
  algorithm, with $\alpha \geq 1$, has to output an integer between
  $x/\alpha$ and $\alpha x$. If we allowed rational values on output,
  the inapproximability bound would drop to $\sqrt{2}-\varepsilon$.}).
However, for all our rules a much stronger result holds.
\begin{proposition}\label{pro:hard-approx}
  Let $f$ be a $1$-concave Thiele rule (different from AV), its greedy
  variant, Phragm{\'e}n, MEqS or Phase~1 of MEqS.  Unless
  $\p \neq \np$, there is no polynomial-time approximation algorithm
  for $f$-\#\textsc{Winning-Committees} with polynomially-bounded
  approximation ratio.
\end{proposition}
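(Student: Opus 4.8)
The plan is to amplify the $1$-versus-$2$ gap from the \textsc{Unique-Committee} reductions into a $1$-versus-$2^r$ gap by placing $r$ independent copies of a hard instance inside a single election, and then to pick $r$ as a suitable polynomial so that no polynomially-bounded approximation ratio can survive the gap.

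Concretely, for each rule $f$ we already have (from Proposition~\ref{pro:thiele-unique}, Theorem~\ref{thm:uniqgreedyhard}, Corollary~\ref{cor:cc-pav-phragmen}, and Theorem~\ref{thm:meqs-p1-uni}) a polynomial-time reduction from an $\np$-hard decision problem $L$ (\textsc{Independent-Set} or \textsc{X3C}) that maps an instance $I$ to an election $E_I$ with a fixed committee size $k_I$ such that $f(E_I,k_I)$ contains exactly one committee when $I \notin L$ and at least two committees when $I \in L$; moreover the winning committees all have the same cardinality $|W|$ in both cases. First I would form the election $E_I^{(r)}$ consisting of $r$ pairwise disjoint copies of $E_I$ (disjoint candidate sets and disjoint voter collections), with global committee size $r\cdot|W|$. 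The central claim to establish is that the winning committees of $E_I^{(r)}$ are exactly the $r$-fold products of the per-copy winners, so that $|f(E_I^{(r)},r|W|)| = |f(E_I,k_I)|^r$, which equals $1$ when $I \notin L$ and is at least $2^r$ when $I \in L$.

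The factorization is where the real work lies, and it is rule-dependent. For Thiele rules and their greedy variants the $w$-score is additive over the disjoint voter groups, so the total score of a committee is the sum of its per-copy scores; combined with $1$-concavity of $w$ (diminishing marginal returns), a size-$r|W|$ committee maximizes the total score only by allocating exactly $|W|$ seats to each copy and taking a per-copy winner there, which yields the product structure. For the sequential rules (Phragm{\'e}n, MEqS, and its Phase~1) the analogous point is that, since no voter approves candidates from two different copies, the marginal-score computations, the ``buying'' times, and the budget updates inside each copy proceed exactly as in the standalone run and are unaffected by how selections interleave across copies; hence the set of committees reachable under parallel-universes tie-breaking is precisely the product of the per-copy reachable sets. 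I expect this non-interference argument to be the main obstacle, in particular ruling out spurious committees produced by cross-copy ties or by the global committee-size cutoff halting a copy prematurely. It can be handled by observing that every copy terminates each run having selected the same number $|W|$ of candidates (the winner sizes coincide in the yes- and no-cases of each reduction), so the global cutoff $r|W|$ is reached exactly when every copy has completed.

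Finally I would close the argument via the approximation-to-decision reduction. Suppose some polynomial-time algorithm approximated $f$-\#\textsc{Winning-Committees} within a polynomially-bounded ratio $\alpha(N) = \poly(N)$, where $N$ is the input size; by the convention of the earlier footnote it outputs an integer in $[x/\alpha,\alpha x]$ on an instance with $x$ winning committees. On $E_I^{(r)}$, whose size is $N = O(r\cdot|I|)$, it would output a value at most $\alpha$ when $I\notin L$ and at least $2^r/\alpha$ when $I\in L$. Choosing $r$ to be a fixed polynomial in $|I|$ large enough that $2^r > \alpha(N)^2$ (possible since $\alpha(O(r|I|))$ stays polynomial while $2^r$ grows exponentially in $r$) makes these two output ranges disjoint, so the approximation would decide $L$ in polynomial time, forcing $\p = \np$. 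This establishes the claimed inapproximability for every rule in the statement.
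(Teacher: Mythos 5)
Your route is genuinely different from the paper's---the paper amplifies \emph{additively} (for the sequential rules it clones the tie-participating candidate $p$ into polynomially many copies approved by the same voters; for Thiele rules it replaces $G$ by a graph power $G^p$ in which every size-$k$ independent set of $G$ has $p$ disjoint copies), obtaining a gap of $1$ versus polynomially many committees, whereas you amplify \emph{multiplicatively} via $r$ disjoint copies of the whole election, aiming at $1$ versus $2^r$. Your exponential gap would in fact be the more comfortable one for defeating all polynomially-bounded ratios under the paper's convention that an $\alpha$-approximation outputs an integer in $[x/\alpha,\alpha x]$. But the step you yourself identify as ``where the real work lies'' contains a genuine error for the Thiele rules: $1$-concavity of $w$ does \emph{not} imply that a size-$r|W|$ committee must allocate exactly $|W|$ seats to each copy. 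The optimal-value function $j \mapsto \opt(j)$ of a $1$-concave Thiele score is not concave in the committee size---already for CCAV (a coverage function) take one set covering $10$ elements and three sets covering disjoint $5$-element blocks: the optima for sizes $1,2,3$ are $10,11,15$, with increments $10,1,4$. Moreover, even in the specific elections from Proposition~\ref{pro:thiele-unique}, if $G$ happens to contain an independent set of size $k+1$, the unbalanced allocation $(k+1,k-1)$ across two copies ties exactly with the balanced one (score $2\delta k$), so your claimed identity $|f(E_I^{(r)},r|W|)| = |f(E_I,k_I)|^r$ is simply false. The construction is salvageable because you only need $1$ versus at least $2^r$: in that election every candidate is approved by exactly $\delta$ voters, so any committee $S$ has score at most $\delta|S|$ with equality iff $S$ is independent, which gives uniqueness in the no-case and at least $2^r$ product committees in the yes-case; but that argument rests on the construction's $\delta$-regular structure, not on $1$-concavity in the abstract.

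For the sequential rules your non-interference claim is the right idea but your proposed resolution is circular where it matters most. For Greedy-$f$ the only stopping rule is the global size cutoff $r(n+1)$, so the dangerous runs are exactly those in which one copy consumes $n+2$ picks (both $p$ and $d$) while another is starved; asserting that ``every copy terminates having selected $|W|$ candidates because the winner sizes coincide'' presupposes the balanced interleaving you need to prove. The actual argument must go through score scales in the construction of Theorem~\ref{thm:uniqgreedyhard}: after one of $\{p,d\}$ is chosen in a copy, the $D$ voters approving $\{p,d\}$ leave the other with marginal score on the order of $\delta_2 D$, strictly below the $\delta_1 D$-scale scores of all not-yet-chosen candidates in every copy, so the $r(n+1)$ global selections are forced to be $n+1$ per copy, with the within-copy tie structure intact. (For Phragm{\'e}n and Phase~1 of MEqS the factorization is cleaner, since purchases are governed by per-copy affordability and the per-voter budget $\nicefrac{rk}{r|V|}$ is unchanged.) In short: your architecture is sound and, once these two rule-specific lemmas are actually proved, yields a stronger gap than the paper's; but as written, the Thiele factorization lemma is false as stated and the greedy non-interference lemma is assumed rather than established, which is precisely the work the paper's simpler clone-based amplification was designed to avoid.
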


\begin{proof}
    For Phase~1 of MEqS, it suffices to
    use the proof of Theorem~\ref{thm:meqs-p1-uni} with candidate $p$
    replaced by polynomially many copies, each approved by the same
    voters. Either we get a unique winning committee or polynomially many
    tied ones. The same trick works with 
    the greedy variants of $1$-concave Thiele rules and 
    Theorem~\ref{thm:uniqgreedyhard}, and
    Phragm{\'e}n and
    Corollary~\ref{cor:cc-pav-phragmen}.
        
    For the case of $1$-concave Thiele rules, we use the following
    strategy.  Let $p$ be some positive integer and let $(G,k)$ be an
    instance of \textsc{Independent-Set}, where $G$ is a graph and $k$ is
    an integer. We form a graph $G^p$ whose vertex set is:
    $V(G^p) = \{ v^i \mid v \in V(G), i \in [p] \}$ and where two
    vertices, $u^i$ and $v^j$, are connected by an edge either if
    $i \neq j$ or if $i=j$ and $u$ and $v$ are connected by an edge in
    $G$. Consequently, if $G$ has $x$ independent sets of size $k$, then
    $G^p$ has $px$ such sets (each independent set of $G^p$ is a copy of
    an independent set of $G$, using only vertices with the same
    superscript).  Hence, if in the proof of
    Propostion~\ref{pro:thiele-unique} we replace graph $G$ with graph
    $G^p$, where $p$ is some polynomial function of the input size, then
    we obtain an election that either has a unique winning committee (if
    the input graph did not have an independent set of a required size) or
    an election that has polynomially many winning committees (if the
    graph had at least one such independent set).
\end{proof}




We note that the construction given in the proof of
Proposition~\ref{pro:thiele-unique} also shows that for each
$1$-concave Thiele rule $f \neq$ AV, $f$-\#\textsc{Winning-Committees} is both
$\sharpp$-hard and $\#\wone$-hard for parameterization by the
committee size (because this reduction produces elections that have
one more winning committee than the number of size-$k$ independent
sets in the input graph, and counting independent sets is both
$\sharpp$-complete and $\#\wone$-complete for parameterization by~$k$~\citep{val:j:permanent,flu-gro:j:parameterized-counting}). For
greedy variants of $1$-concave Thiele rules and Phragm{\'e}n, the
situation is more interesting because \textsc{Unique-Committee} is in
$\fpt$ (for the parameterization by the committee size). Yet,
\#\textsc{Winning-Committees} is also hard.

\begin{theorem}\label{thm:greedy-count}
  Let $f$ be Phragm{\'e}n or a greedy variant of a $1$-con\-cave Thiele rule (different
  from AV). $f$-\#\textsc{Winning-Committees} is $\sharpp$-hard and
  $\#\wone$-hard (for the parameterization by the committee size).
\end{theorem}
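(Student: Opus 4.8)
The plan is to give a parameter-preserving Turing reduction from the problem of counting size-$k$ independent sets in a graph, which is both $\sharpp$-complete and $\sharpwone$-complete for the parameterization by $k$~\citep{val:j:permanent,flu-gro:j:parameterized-counting}; establishing a single such reduction yields both hardness claims at once. Write $I_j(G)$ for the number of size-$j$ independent sets of $G$. Given an instance $(G,k)$, I would reuse the voter gadget underlying Theorem~\ref{thm:uniqgreedyhard}: for each edge $\{v_i,v_j\}$ install a block of voters approving $\{v_i,v_j\}$, and for each vertex add enough private (single-approval) voters to equalize the initial scores of all vertex candidates. As computed in the proof of Theorem~\ref{thm:uniqgreedyhard}, selecting a vertex leaves the score of every non-adjacent vertex untouched but strictly lowers the score of each neighbour; hence Greedy-$f$ is forced, at every step, to extend the current set by a vertex that keeps it independent, so any committee it can output in its first $k$ steps is an independent set of the graph. (If the rule has $\delta_1=\delta_2$, I first prepend $t$ universal candidates approved by all voters, exactly as in the relaxation step of Theorem~\ref{thm:uniqgreedyhard}, so that the strict drop occurs at level $t{+}1$; this inflates the committee size only by the constant $t$.)

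The difficulty, and the step I expect to be the main obstacle, is that with committee size exactly $k$ the rule might \emph{get stuck}: upon reaching a maximal independent set of size $j<k$ it is forced to add adjacent vertices, producing spurious non-independent committees that would pollute the count. To rule this out I would pad the graph with $r\ge k$ isolated vertices, obtaining $G^{*}_{r}$, and give each of them private voters so that its score equals the common value and, having no incident edges, \emph{never} decreases. Then at every one of the first $k$ steps an unselected isolated vertex is still a maximum-score candidate, so the rule can always extend independently and never gets stuck; consequently the set of winning committees is \emph{exactly} the family of size-$k$ independent sets of $G^{*}_{r}$, with distinct independent sets giving distinct committees. Thus the oracle returns $I_k(G^{*}_{r})=\sum_{j=0}^{k}\binom{r}{k-j}I_j(G)$, since a size-$k$ independent set of $G^{*}_{r}$ is an independent set of $G$ together with a subset of the isolated vertices.

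Finally I would disentangle $I_k(G)$ by interpolation: querying the oracle for $r=k,k+1,\dots,2k$ yields $k+1$ linear equations in the unknowns $I_0(G),\dots,I_k(G)$ whose coefficient matrix $\bigl[\binom{r}{k-j}\bigr]$ is nonsingular (column $j$ is a polynomial in $r$ of degree $k-j$, so the degrees are distinct and the system reduces to a Vandermonde one), and solving it recovers $I_k(G)$. This is a polynomial-time reduction making $O(k)$ oracle calls, each with committee size $k$ (or $k{+}t$), so the parameter is preserved and we obtain both $\sharpp$-hardness and $\sharpwone$-hardness.

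For \phragmen{} the same template works once edges are encoded by \emph{shared} voters: a voter approving both endpoints of an edge is reset to zero budget the moment one endpoint is bought, delaying the other endpoint exactly as the edge voters lower a neighbour's marginal score above. Private voters equalize the purchase times and the isolated padding vertices (buyable at the common time and never delayed) again force the first $k$ purchases to form an independent set, so the winning committees of \phragmen{} on $G^{*}_{r}$ are precisely the size-$k$ independent sets and the same interpolation finishes the proof. In both cases the crux is eliminating the stuck-committee pollution, which the full-score isolated padding together with the interpolation step is designed to achieve.
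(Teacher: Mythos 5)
Your proof is correct, but it takes a genuinely different route from the paper's. The paper reduces from $\sharpmatching$ (counting size-$k$ matchings, which is $\sharpwone$-hard by Curticapean and Marx): edges become candidates, vertices become voters, and an auxiliary candidate $p$ approved by two fresh voters is added; since every candidate then has score exactly $2$, the greedy rule (and \phragmen{}, with all purchases happening at time $\nicefrac{1}{2}$) first exhausts score-$2$ candidates, and the number of size-$k$ matchings is recovered from just \emph{two} oracle calls---the count for $E_p$ at committee size $k$ minus the count for $E$ at size $k-1$---with the candidate $p$ playing exactly the role of absorbing the ``stuck'' runs you worried about. You instead reduce from counting size-$k$ independent sets (the source the paper uses only for the non-greedy Thiele rules), reuse the vertex-candidate gadget of Theorem~\ref{thm:uniqgreedyhard}, and neutralize stuck runs by padding with $r \geq k$ full-score isolated vertices, so that winning committees biject with size-$k$ independent sets of the padded graph; this costs you $k+1$ oracle calls and an interpolation step, whose coefficient matrix $\bigl[\binom{r}{k-j}\bigr]$ you correctly show nonsingular via the distinct-degree argument (columns are polynomials in $r$ of degrees $k,k-1,\ldots,0$ evaluated at $k+1$ distinct points). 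Both arguments are parameter-preserving Turing reductions, so both yield $\sharpp$- and $\sharpwone$-hardness; your universal-dummy fix for $\delta_1=\delta_2$ mirrors the paper's, and your \phragmen{} analysis (every purchase occurs at the common time $\nicefrac{1}{N}$, adjacent vertices being delayed because their shared edge voters are zeroed) is sound. The paper's route is leaner---two oracle calls, no linear algebra, no score-equalizing private voters---while yours buys an exact committee/independent-set correspondence, a single construction serving greedy Thiele rules and \phragmen{} uniformly, and independence from the $\sharpmatching$ hardness result.
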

\begin{proof}
  We first consider greedy variants of $1$-concave Thiele rules.  Let
  $w$ be the weight function used by $f$. Let $x = w(2) - w(1)$. We
  have $w(1) = 1$ and we assume that $x < 1$ (we will consider the
  other case later).  We show a reduction from the $\sharpmatching$
  problem, where we are given a graph~$G$, an integer~$k$, and we ask
  for the number of size-$k$ matchings (i.e., the number of size-$k$
  sets of edges such that no two edges in the set share a
  vertex). $\sharpmatching$ is $\#\wone$-hard for parameterization by
  $k$~\citep{cur-mar:c:param-counting-matchings}.
    
  Let $G$ and $k$ be our input.  We form an election $E$ where the
  edges of $G$ are the candidates and the vertices are the voters. For
  each edge $e = \{u,v\}$, the corresponding edge candidate is
  approved by the vertex voters corresponding to $u$ and $v$. We also
  form an election $E_p$, equal to $E$ except that it has two
  extra voters who both approve a single new candidate, $p$.

  We note that every candidate in both $E$ and $E_p$ is approved by
  exactly two voters. Hence, the greedy procedure first keeps on
  choosing candidates whose score is $2$ (i.e., edges that jointly
  form a matching, or candidate $p$ in $E_p$). It selects the
  candidates with lower scores (i.e., edges that break a matching)
  only when score-$2$ candidates disappear.

  Let $W$ be some size-$k$ $f$-winning committee for election~$E_p$.
  We consider two cases:
  \begin{enumerate}
  \item If $p$ does not belong to $W$, then the edge candidates in~$W$
    form a matching. If it were not the case, then before including an
    edge candidate with score lower than $2$, the greedy algorithm
    would have included $p$ in the committee.

  \item If $p$ belongs to $W$ then $W \setminus \{p\}$ is an
    $f$-winning committee of size $k-1$ for election $E$. Indeed, if
    we take the run of the greedy algorithm that computes $W$ and
    remove the iteration where $p$ is selected, we get a correct run
    of the algorithm for election $E$ and committee size
    $k-1$. Further, for every size-$(k-1)$ committee winning in $E$,
    $S \cup \{p\}$ is a size-$k$ winning committee in $E_p$ (because
    we can always select $p$ in the first iteration).
  \end{enumerate}

  So, to compute the number of size-$k$ matchings in $G$, it suffices
  to count the number of winning size-$k$ committees in $E_p$ and subtract
  from it the
  number of winning size-$(k-1)$ committees in $E$.
%
    If $x = 1$, then we find the smallest value $t$ such that
    $w(t)-w(t-1) = 1$ and $w(t+1)-w(t) < 1$ and use the same construction
    as above, except that there are $t-1$ dummy candidates approved by
    every voter.
    
    Regarding Phragm{\'e}n, it turns out that the same construction as for
    the greedy variants of $1$-concave Thiele rules still works.  In time
    $t=\nicefrac{1}{2}$, each voter has $\nicefrac{1}{2}$ budget and each
    candidate (including $p$) can be purchased (because each candidate is
    approved by exactly two voters and their total budget is $1$). Hence,
    if $W$ is a winning committee for $E_p$ but $W$ does not include $p$,
    then all its members were purchased at time $\nicefrac{1}{2}$.
    %
    %
    It means that these candidates were approved by disjoints sets of
    voters, whose corresponding edges edges form a size-$k$ matching.  On
    the other hand, if $p$ belongs to $W$ then $W \setminus {p}$ is a
    winning size-$(k-1)$ committee for $E$, as in the above proof.
\end{proof}


\begin{corollary}
  \#\textsc{Winning-Committees} is $\sharpp$-hard and $\#\wone$-hard (for the parameterization
  by the committee size) for GreedyCCAV, GreedyPAV, Phragm{\'e}n, and MEqS.
\end{corollary}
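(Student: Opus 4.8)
The plan is to notice that three of the four rules in the corollary are already covered directly by Theorem~\ref{thm:greedy-count}, and to reduce the fourth to one of them. Indeed, GreedyCCAV and GreedyPAV are the greedy variants of the $1$-concave Thiele rules CCAV and PAV (both different from AV), and Phragm{\'e}n is treated explicitly, so for all three of these rules both the $\sharpp$-hardness and the $\#\wone$-hardness (for the parameterization by committee size) are immediate consequences of Theorem~\ref{thm:greedy-count}. The only remaining work is to handle the full version of MEqS.

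For MEqS, I would give a parsimonious reduction from Phragm{\'e}n-\#\textsc{Winning-Committees} that leaves the committee size untouched, so that it transfers both hardness results at once. Given a Phragm{\'e}n instance consisting of an election $E = (C,V)$ and committee size $k$, I would form a new election $E'$ by adding $N$ extra voters with empty approval sets, choosing $N$ so that $(\max_{c \in C}|A(c)|) \cdot \nicefrac{k}{(|V|+N)} < 1$; a value of $N$ polynomial in the input size suffices, and the committee size stays $k$. The claim is that the family of MEqS winning committees of $(E',k)$ equals the family of Phragm{\'e}n winning committees of $(E,k)$, so the two counts are equal.

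The key step is to verify this equality. In $E'$ every voter starts with budget $\nicefrac{k}{(|V|+N)}$, and by the choice of $N$ no candidate $c$ is affordable in Phase~1 (its approving voters jointly hold $|A(c)|\cdot\nicefrac{k}{(|V|+N)} < 1$), so Phase~1 selects nothing and the entire committee is produced by the Phragm{\'e}n completion, started with these uniform budgets. The empty-vote voters approve nobody and so remain inert; the only active voters are the original ones, all holding the same budget $b_0 = \nicefrac{k}{(|V|+N)}$ and accumulating money at a common rate. A time-shift argument then shows this process is identical to ordinary Phragm{\'e}n on $E$: a voter who has not yet spent holds budget $b_0 + \tau$ at completion-time $\tau$, matching the budget $s$ at standard-Phragm{\'e}n time $s = b_0 + \tau$, so affordability times, the order of purchases, and the parallel-universes tie-breaking all coincide up to the shift. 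Hence the winning committees, and therefore their number, are the same.

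The main obstacle I anticipate is precisely this equality: one must check that the Phragm{\'e}n completion with a uniform positive initial budget is genuinely equivalent to ordinary Phragm{\'e}n (it is the \emph{uniformity} of $b_0$ across all active voters that makes the time-shift exact), and that the inert empty-vote voters never perturb a purchase. Granting this, composing the reduction with Theorem~\ref{thm:greedy-count} carries both the $\sharpp$-hardness and the $\#\wone$-hardness over to full MEqS, which completes the corollary.
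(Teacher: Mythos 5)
Your proposal is correct and matches the paper's argument: GreedyCCAV, GreedyPAV, and Phragm{\'e}n follow directly from Theorem~\ref{thm:greedy-count}, and full MEqS is handled by padding the election with enough empty-vote voters so that Phase~1 becomes vacuous and MEqS coincides with Phragm{\'e}n---exactly the trick the paper invokes (and had already introduced after Theorem~\ref{thm:meqs-p1-uni}). Your time-shift verification of the equivalence between Phragm{\'e}n started from a uniform positive budget and ordinary Phragm{\'e}n is a welcome elaboration of a step the paper merely asserts.
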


The above result holds for MEqS because of its relation to
Phragm{\'e}n. For Phase~1 of MEqS, we have $\sharpp$-hardness, but
$\sharpwone$-hardness so far remains elusive.

\begin{theorem}\label{thm:meqs-counting}
  \#\textsc{Winning-Committees} is $\sharpp$-hard for Phase~1 of MEqS.
\end{theorem}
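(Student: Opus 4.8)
The plan is to prove $\sharpp$-hardness of $\mesphaseone$-\#\textsc{Winning-Committees} by reducing from $\sharpxthreec$, the counting version of \textsc{X3C}. This is the natural approach because the hardness construction for $\mesphaseone$-$\uniquecommittee$ in Theorem~\ref{thm:meqs-p1-uni} already reduces from (the complement of) \textsc{X3C}, and its structure is such that the number of winning committees is directly controlled by the number of exact covers. So the first thing I would do is revisit that construction and check that it is parsimonious (or can be made so) as a counting reduction: the key feature is that candidate $p$ can be selected in the crucial $(k+1)$-st-style iteration \emph{exactly} when the already-purchased $S$-candidates form an exact cover of $U'$. If every exact cover of $U$ gives rise to a distinct winning committee and vice versa, the count of winning committees equals the number of exact covers plus one (for the always-present committee $W_d$), yielding the reduction.

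The delicate point is that in the $\mesphaseone$ construction the $S$-candidates are bought in a sequence, and the committee ultimately contains \emph{all} of the $S$-candidates regardless of which subset was purchased cheaply first. So $W_p$ and $W_d$ as written do not by themselves distinguish different exact covers---they only record whether $p$ or $d$ was selected. To turn this into a faithful \emph{count}, I would modify the construction so that the choice of which exact cover is realized is recorded in the final committee. Concretely, I would attach to each potential cover a distinguishing ``marker'' candidate whose selectability depends on the order/identity of the cheaply-bought $S$-candidates, or, more simply, replace candidate $p$ by a gadget so that each distinct exact cover produces a distinct member of $p$'s role. Thus the main step is to argue a bijection between exact covers of $U$ and the winning committees that contain the $p$-side marker, while $W_d$ remains the single ``default'' committee when no exact cover is active.

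The technical heart of the argument will be tracking the budgets exactly as in the proof of Theorem~\ref{thm:meqs-p1-uni}: after $C_B$ and $C_U$ are purchased, the voters in $U' \cup U''$ each hold budget $\nicefrac{1}{(18n+6)}$, and an $S$-candidate is bought at that per-voter cost precisely while it covers only uncovered universe voters. I would reuse the calculation showing that once no $S$-candidate is cheaply buyable, the only affordable candidates are $p$ (cost $\nicefrac{1}{12n}$) and $d$, and that $p$ is affordable iff the cheaply-bought $S$-candidates exactly cover $U'$. The parsimony claim then reduces to verifying that distinct exact covers lead, through the parallel-universes tie-breaking, to distinct committees, and that no two exact covers collapse to the same committee.

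The step I expect to be the main obstacle is exactly this \emph{distinctness/bijection} bookkeeping. Because $\mesphaseone$ eventually buys every $S$-candidate (the leftover ones via the $V_S$ voters), two different exact covers could in principle yield the same final set of selected candidates, which would make the reduction merely witness-preserving in existence but not in count. Overcoming this requires either a genuine parsimony argument---showing the cover is encoded in which candidate among a $p$-gadget gets bought---or invoking a counting version of \textsc{X3C} that is robust to a bounded-multiplicity blow-up (as in the $\#\wone$ discussion, the copy trick from Proposition~\ref{pro:hard-approx}). Since the theorem only asserts $\sharpp$-hardness (not $\#\wone$-hardness), it suffices to use a Turing reduction: I would compute, via a small family of modified instances and an inclusion--exclusion or subtraction argument analogous to the matching-versus-committee subtraction in Theorem~\ref{thm:greedy-count}, the number of exact covers from the number of winning committees, rather than insisting on a one-shot parsimonious map. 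I expect this subtraction/oracle approach to be the cleanest route past the obstacle.
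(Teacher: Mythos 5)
You have correctly diagnosed the central obstacle---in the construction of Theorem~\ref{thm:meqs-p1-uni}, Phase~1 of MEqS eventually buys \emph{every} $S$-candidate (the three unique $V_S$ voters per candidate still hold full budgets), so all exact covers collapse into the single committee $W_p$ and the winning-committee count is either $1$ or $2$, carrying only existence information. But your proposed resolutions do not close this gap. The subtraction argument ``analogous to Theorem~\ref{thm:greedy-count}'' works there only because the committees themselves encode the matchings; here, applied to the unmodified construction (with or without $p$, or with copies of $p$), every oracle answer is a number in $\{1,2\}$ (or polynomially bounded), i.e., essentially one bit of an $\np$ predicate per query, and polynomially many such bits cannot recover the number of exact covers unless $\sharpp$-hard counting reduces to an $\np$ oracle. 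Your alternative---a marker gadget so that ``each distinct exact cover produces a distinct member of $p$'s role''---also fails as sketched: at the moment the $p$/$d$ choice is made, the residual budgets are \emph{identical} across all exact covers (all $U' \cup U''$ voters covered, each $V_{pd}$ voter holding $\nicefrac{1}{12n}$), so no gadget selected at that point can depend on which cover was realized; and attaching a marker per cover is infeasible since there may be exponentially many covers.

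The missing idea, which the paper's proof supplies, is to modify the election so that the leftover $S$-candidates are \emph{never} purchased, making the final committee faithfully record exactly the cheaply-bought (pairwise disjoint) $S$-candidates. Concretely, the paper removes $9n$ voters from $B_U$ and has the $9n$ voters of $V_S$ additionally approve the candidates in $C_U$ (adjusting the committee size accordingly); after the $C_U$ purchases, the $V_S$ voters' budgets drop to $\nicefrac{1}{(18n+6)}$, so once the cheap phase ends, every remaining $S$-candidate's approvers have total budget below $1$ and it is unaffordable. The winning committees then split into (i) $d$-committees whose $S$-part is a maximal disjoint subfamily of $\calS$ and (ii) $p$-committees whose $S$-part is precisely an exact cover, with distinct covers yielding distinct committees. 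Only \emph{now} does your subtraction instinct apply, and it is exactly what the paper does: two oracle calls, on $E_{pd}$ and on $E_d$ (the same election without $p$), and the difference equals the number of exact covers. So your high-level plan (Turing reduction from $\sharpxthreec$ via subtraction) matches the paper's, but without the budget-depletion modification the committees do not encode covers and the reduction, as you describe it, would not establish $\sharpp$-hardness.
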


\begin{proof}
    We give a reduction from \textsc{\#X3C}, i.e., a counting version of
    the problem used in the proof of Theorem~\ref{thm:meqs-p1-uni}.  Let
    $E_{pd}$ be the same election as constructed in that proof, except for
    the following change: Group $B_U$ contains $9n$ voters fewer and the
    $9n$ voters from $V_S$ additionally approve the candidates from
    $C_U$. Consequently, the committee size decreases by $9n$ (because we
    maintain that the committee size is equal to the number of voters).
    Because of this change, when selecting the candidates from $C_U$, the
    budget of the voters from $V_S$ drops to
    $\nicefrac{1}{(18n+6)}$. Then, after the iterations where
    $S$-candidates are selected at per-voter cost of
    $\nicefrac{1}{(18n+6)}$, no further $S$-candidates are selected
    (because the voters approving them have total budget lower than
    $1$). As a consequence, Phase~1 of MEqS applied to election $E_{pd}$
    chooses all committees of the following forms:
    \begin{enumerate}
    \item Committees consisting of all candidates from
      $C_B \cup C_U \cup \{c_1,c_2\} \cup \{d\}$ and a subset of
      $S$-candidates such that all other $S$-candidates include at least
      one covered voter from $U' \cup U''$.
    \item Committees consisting of all candidates from
      $C_B \cup C_U \cup \{c_1,c_2\} \cup \{p\}$ and a subset of
      $S$-candidates that correspond to an exact cover of $U$.
    \end{enumerate}
    Next, we form election $E_d$ identical to $E_{pd}$ except that it does
    not include candidate $p$. For $E_d$, Phase~1 of MEqS selects all the
    committees of the first type above.  Hence, to compute the number of
    solutions for our instance of \textsc{\#X3C}, it suffices to subtract
    the number of committees selected by Phase~1 of MEqS for $E_{d}$ from
    the number of committees selected by Phase~1 of MEqS for
    $E_{pd}$. This completes the proof.
\end{proof}

\section{Experiments}

A'priori, it is not clear how frequent are ties in multiwinner
elections. 
In this section we present experiments that show that they, indeed,
are quite common, at least if one considers 
elections of moderate size.

\subsection{Statistical Cultures and the Basic Experiment}

Below we describe the statistical cultures that we use to generate elections (namely, 
the resampling model, the interval model, and PabuLib data)
and how we perform our basic experiments.



\paragraph{Resampling Model~\citep{szu-fal-jan-lac-sli-sor-tal:c:sampling-approval-elections}.}
We have two parameters, $p$ and $\phi$, both between $0$ and $1$. To
generate an election with candidate set $C = \{c_1, \ldots, c_m\}$ and
with $n$ voters, we first choose uniformly at random a central vote
$u$ approving exactly $\lfloor p m \rfloor$ candidates. Then, we
generate the votes, for each considering the candidates independently,
one by one. For a vote~$v$ and candidate~$c$, with probability
$1-\phi$ we copy $c$'s approval status from $u$ to $v$ (i.e., if $u$
approves $c$, then so does $v$; if $u$ does not approve $c$ then
neither does $v$), and with probability $\phi$ we ``resample'' the
approval status of~$c$, i.e., we let~$v$ approve~$c$ with probability
$p$ (and disapprove it with probability $1-p$).  On average, each
voter approves about $pm$ candidates.

\paragraph{Interval Model.}
In the Interval model, each voter and each candidate is a point on a
$[0,1]$ interval, chosen uniformly at random. Additionally, each
candidate $c$ has radius $r_c$ and a voter $v$ approves canidate $c$
if the distance between their points is at most $r_c$. Intuitively,
the larger the radius, the more appealing is a given candidate.
We generate the radii of the candidates
by taking a base radius $r$ as input and, then, choosing each
candidates' radius from the normal distribution with mean $r$ and
standard deviation $\nicefrac{r}{2}$.  
Such spatial models are discussed in detail, e.g.,
by \citet{enelow1984spatial,enelow1990advances}. In the approval
setting, they were recently considered, e.g., by
\citet{bre-fal-kac-nie2019:experimental_ejr} and
\citet{god-bat-sko-fal:c:2d}.

\paragraph{PabuLib Data.} PabuLib is a library of real-life
participatory budgeting (PB) instances, mostly from Polish
cities~\citep{sto-szu-tal:t:pabulib}. A PB instance is a multiwinner
election where the candidates (referred to as projects) have costs and
the goal is to choose a ``committee'' of at most a given total
cost. We restrict our attention to instances from Warsaw, which use
approval voting, and we disregard the cost information (while this
makes our data less realistic, we are not aware of other sources of
real-life data for approval elections that would include sufficiently
large candidate and voter sets).
To generate an election with $m$ candidates and $n$ voters, we
randomly select a Warsaw PB instance, remove all but $m$ candidates
with the highest approval score, and randomly draw $n$ voters (with
repetition, restricting our attention only to voters who approve at
least one of the remaining candidates).
We consider 120 PB instances from Warsaw that include at least 30
candidates (each of them includes at least one thousand votes, usually a few thousand).

\paragraph{Basic Experiment.}
In a basic experiment we fix the number of candidates $m$, the
committee size~$k$, and a statistical culture.  Then, for each number
$n$ of voters between $20$ and $100$ (with a step of $1$) we generate
$1000$ elections with $m$ candidates and $n$~voters, and for each of
them compute whether our rules have a unique winning committee (we
omit GreedyCCAV).
Then we present a figure
that on the $x$ axis has the number of voters and on the $y$ axis has
the fraction of elections that had a unique winning committee for a
given rule.
For AV and SAV, we use the algorithm from the beginning of
Section~\ref{sec:counting}, for sequential rules we use the FPT
algorithm from Theorem~\ref{thm:greedy-fpt-unique}, and for CCAV and
PAV we use the ILP-based approach, with a solver that
provides multiple solutions.


\subsection{Results}

\newcommand{\resampling}[3]{png_ijcai23/unique/resampling/m_#1_k_#2_reps_1000_params_('p', #3) ('phi', 0.75).png}

\newcommand{\pabulib}[2]{png_ijcai23/unique/pabulib_with_replacement/m_#1_k_#2_reps_1000_params_best_cands_num_#1.png}

\newcommand{\interval}[3]{png_ijcai23/unique/euclidean_cr/m_#1_k_#2_reps_1000_params_('radius', #3) ('dim', 1) ('space', 'uniform').png}

\begin{figure*}[t]
  \centering
  \begin{subfigure}{0.3\textwidth}
    \centering
    \includegraphics[width=4.5cm]{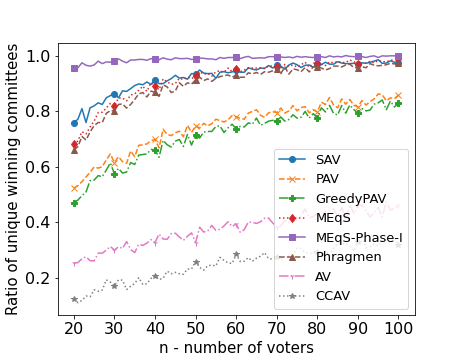}
    \caption{$m=30$, $k=5$, \\$k/2$ approvals/vote, \\ resampling model, $\phi = 0.75$}
  \end{subfigure}  
  \begin{subfigure}{0.3\textwidth}
    \centering
    \includegraphics[width=4.5cm]{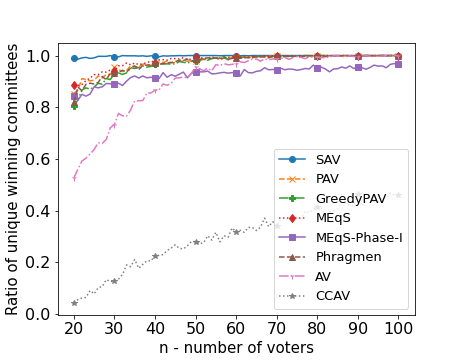}
    \caption{$m=30$, $k=5$, \\$k$ approvals/vote\\ resampling model, $\phi = 0.75$}
  \end{subfigure}  
  \begin{subfigure}{0.3\textwidth}
    \centering
    \includegraphics[width=4.5cm]{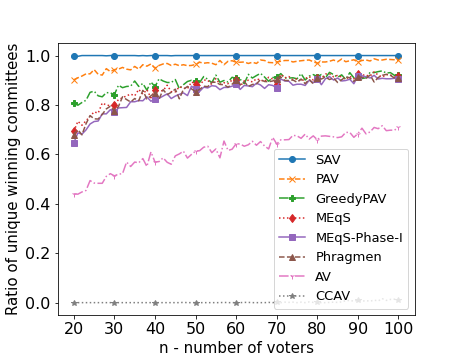}
    \caption{$m=30$, $k=5$, \\$2k$ approvals/vote \\ resampling model, $\phi = 0.75$}
  \end{subfigure}

  \begin{subfigure}{0.3\textwidth}
    \centering
    \includegraphics[width=4.5cm]{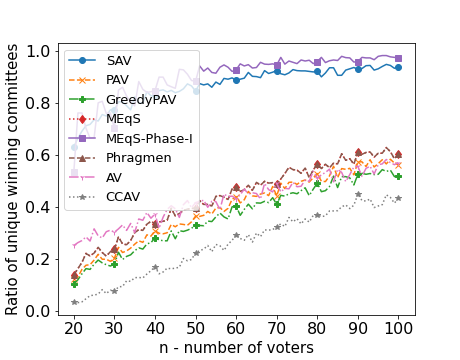}
    \caption{$m=30$, $k=5$, \\$k/2$ approvals/vote \\ Interval}
  \end{subfigure}  
  \begin{subfigure}{0.3\textwidth}
    \centering
    \includegraphics[width=4.5cm]{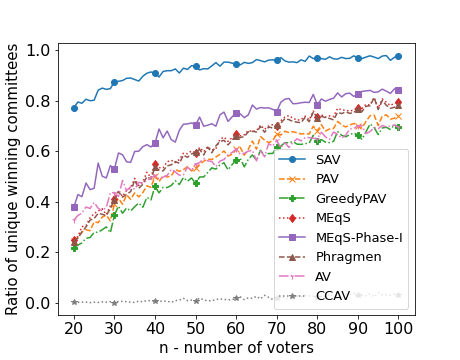}
    \caption{$m=30$, $k=5$, \\$k$ approvals/vote \\ Interval}
  \end{subfigure}  
  \begin{subfigure}{0.3\textwidth}
    \centering
    \includegraphics[width=4.5cm]{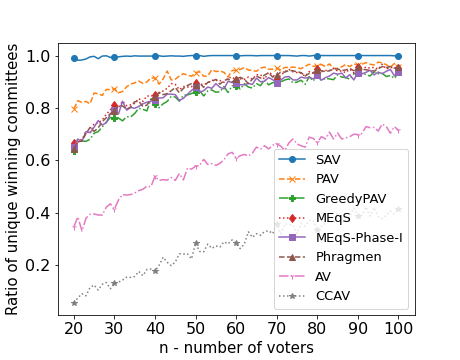}
    \caption{$m=30$, $k=5$,\\ PabuLib (Warsaw)\\}
  \end{subfigure}  


  \caption{\label{fig:resampling}Results of our experiments. By ``$k/2$
    approvals/vote'' we mean that on average a single vote contains
    approximately $k/2$ approvals (the meaning of $k$ and $2k$ is
    analogous).}
\end{figure*}

All our experiments regard $30$ candidates and committee size~$5$ (the
results for $50$ and $100$ candidates, and committee size $10$,
are analogous).  First, we performed three basic
experiments for the resampling model with the parameter $p$ (approval
probability) set so that, on average, each voter approved either
$k/2$, $k$, or $2k$ candidates. We used $\phi = 0.75$ (according to
the results of
\citet{szu-fal-jan-lac-sli-sor-tal:c:sampling-approval-elections},
this value gives elections that resemble the real-life ones).  We
present the results in the top row of
Figure~\ref{fig:resampling}. Next, we also performed two basic
experiments for the Interval model (with the base radius selected so
that, on average, each voter approved either $k/2$ or $k$ candidates),
and with the PabuLib data (see the second row of
Figure~\ref{fig:resampling}).  These experiments support the following
general conclusions.

First, for most scenarios and for most of our rules, there is a
nonnegligible probability of a tie (depending on the
rule and the number of voters, this probability may be as low as $5\%$
or as high as nearly $100\%$). This shows that one needs to be
ready to detect and handle ties in moderately sized multiwinner
elections.

Second, we see that SAV generally leads to fewest ties, CCAV leads to
most, and AV often holds a strong second position in this
category (in the sense that it also leads to a high probability of having a tie in many settings). The other rules are in between. Phase~1 of MEqS often has
significantly fewer ties than the other rules, but full version of
MEqS does not stand out. PAV occasionally leads to fewer ties (in
particular, on PabuLib data and on the resampling model with $2k$
approvals per vote).

\section{Summary}
We have shown that, in general, detecting ties in multiwinner
elections is intractable, but doing so for moderately-sized ones is
perfectly possible. Our experiments show that ties in such elections
are a realistic possibility and one should be ready to handle
them. Intractability of counting winning committees suggests that
tie-breaking by sampling committees may not be feasible. Looking for
fair tie-breaking mechanisms is a natural follow-up research direction.

\paragraph{Acknowledgments.}
This project has received funding from the European 
    Research Council (ERC) under the European Union’s Horizon 2020 
    research and innovation programme (grant agreement No 101002854).
    
    \noindent \includegraphics[width=3cm]{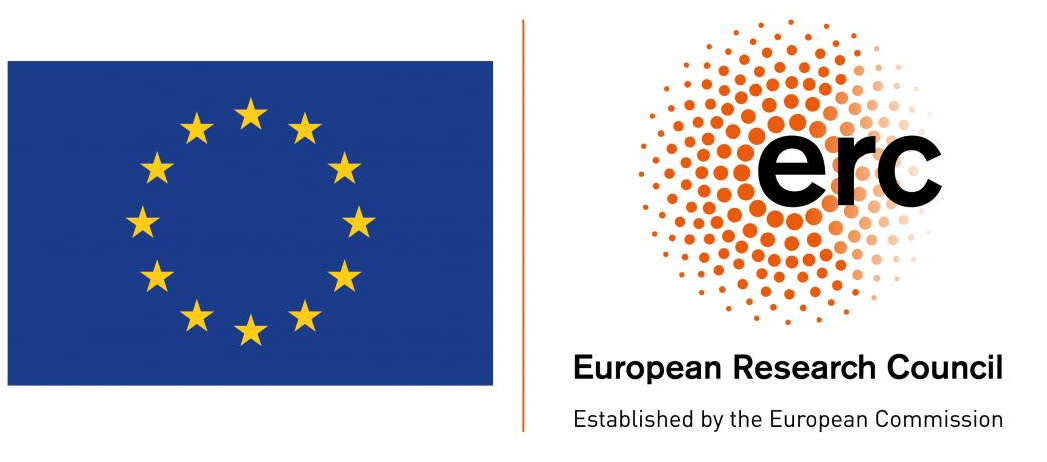}

\bibliographystyle{plainnat}
\bibliography{bib.bib}

\end{document}